\documentclass[journal]{IEEEtran}
\usepackage{url}            
\usepackage{booktabs}       
\usepackage{amsfonts}       
\usepackage{nicefrac}       
\usepackage{microtype}      
\usepackage{lipsum}
\usepackage{amsmath}
\usepackage{graphicx}
\usepackage{subcaption}
\usepackage{graphicx}
\usepackage{amssymb}
\usepackage{pdfpages}
\usepackage{float}
\usepackage{subfiles} 
\usepackage{amsthm}
\usepackage{mathtools}
\usepackage{mathrsfs}
\newtheorem{theorem}{Theorem}

\newtheorem{definition}{Definition}
\newtheorem{remark}{Remark}

\newtheorem{assumption}{Assumption}
\newtheorem{problem}{Problem}
\DeclarePairedDelimiter{\diagfences}{(}{)}
\newcommand{\diag}{\operatorname{diag}\diagfences}
\DeclareUnicodeCharacter{221E}{-}

\DeclarePairedDelimiter{\Intfences}{(}{)}
\newcommand{\Int}{\operatorname{Int}\Intfences}
\usepackage{algorithm}
\usepackage{algpseudocode}
\newcommand\ignore[1]{{}}
\algrenewcommand\algorithmicrequire{\textbf{Input:}}
\algrenewcommand\algorithmicensure{\textbf{Output:}}
\newcommand{\dx}[1]{\textcolor{red}{#1}}

\hyphenation{op-tical net-works semi-conduc-tor}

\usepackage{hyperref}
\usepackage{cleveref}

\begin{document}
\title{Off-Policy Risk-Sensitive Reinforcement Learning Based Constrained Robust Optimal Control}

\author{Cong Li,~\IEEEmembership{}
        Qingchen Liu{$^*$}~\IEEEmembership{}, Zhehua Zhou,~\IEEEmembership{} 
        Martin Buss,  and~Fangzhou Liu
\thanks{C. Li, Z. Zhou, M.Buss, and F. Liu are with the Chair of Automatic Control Engineering, Technical University of Munich, Theresienstr. 90,80333, Munich, Germany e-mail: \{cong.li,  zhehua.zhou, mb, fangzhou.liu\}@tum.de.}
\thanks{Q. Liu is with the Chair of Information-Oriented Control, Technical University of Munich, Barer Stra 21, 80333, Munich, Germany e-mail: liuqingchen1989@gmail.com.}
\newline
\dx{This work has been submitted to the IEEE for possible publication. Copyright may be transferred without notice, after which this version may no longer be accessible.}
}

\maketitle

\begin{abstract}
This paper proposes an off-policy risk-sensitive reinforcement learning based control framework for stabilization of a continuous-time nonlinear system that subjects to additive disturbances, input saturation, and state constraints. By introducing pseudo controls and risk-sensitive input and state penalty terms, the constrained robust stabilization problem of the original system is converted into an equivalent optimal control problem of an auxiliary system. Then, aiming at the transformed optimal control problem, we adopt adaptive dynamic programming (ADP) implemented as a single critic structure to get the approximate solution to the value function of the Hamilton-Jacobi-Bellman (HJB) equation, which results in the approximate optimal control policy that is able to satisfy both input and state constraints under disturbances. By replaying experience data to the off-policy weight update law of the critic artificial neural network, the weight convergence is guaranteed. Moreover, to get experience data to achieve a sufficient excitation required for the weight convergence, online and offline algorithms are developed to serve as principled ways to record informative experience data. 
The equivalence proof demonstrates that the optimal control strategy of the auxiliary system robustly stabilizes the original system without violating input and state constraints. The proofs of system stability and weight convergence are provided. Simulation results reveal the validity of the proposed control framework.
\end{abstract}

\textbf{Keywords:} Off-policy risk-sensitive reinforcement learning, adaptive dynamic programming, robust control,
state constraint, input saturation.

 \section{Introduction}
    \IEEEPARstart{R}{ecently} adaptive dynamic programming (ADP), which emerges as a successful implementation of reinforcement learning (RL) in the control field, has been proposed to approximately solve regulation or tracking problems of continuous or discrete time nonlinear systems with performance and/or robustness requirements, see \cite{liu2020editorial,lewis2009reinforcement}, and the references therein. The artificial neural network (NN) based approximate solution to the value function of the Hamilton-Jacobi-Bellman (HJB) equation facilitates the approximate optimal control strategy under an actor-critic structure. However, although traditional ADP has been widely adopted to tackle performance and robustness problems, input and state constraint satisfaction during the learning process, which is mainly investigated for safety concerns (e.g, restrictions on torques, joint angels and angular velocities of robot manipulators), has not yet been efficiently addressed. Violations of any of them could lead to possible serious consequences such as damage to physical components. This motivates us to develop a constraint-satisfying ADP-based control strategy. Furthermore, for practical applications, the constraint-satisfying control strategy should guarantee desired performance even under an uncertain environment.
    
    \subsection{Prior and related works}
    \emph{Constrained ADP:} 
    In previous ADP related works \cite{abu2005nearly,na2018nonlinear}, by incorporating a suitable nonquadratic functional into the cost function, control limits have been addressed under an actor-critic structure, which leads to a bounded approximate optimal control policy in the form of a bounded hyperbolic tangent function. However, comparing with traditional ADP where a quadratic functional is usually adopted to represent the desired performance regarding control efforts \cite{vamvoudakis2010online}, the introduced nonquadratic functional leads to an inevitable performance compromise problem that is ignored in existing related works \cite{abu2005nearly,na2018nonlinear}.
    In terms of restrictions on system states, most of previous ADP related works adopt the system transformation technique to deal with state constraints under an actor-critic structure \cite{yang2019online,he2019data,sun2018disturbance}. This method seeks for appropriate variable transformations that enable transformed system states to approach to infinity when potential state constraint violation happens.
    Therefore, approximate optimal control strategies that could achieve bounded transformed system states are constraint-satisfying control policies of the original system.
    The system transformation technique, nonetheless, is limited to simple constraint forms, e.g., restricted working space in a rectangular form. For certain state constraints, it may be infeasible to find suitable variable transformations to convert the constrained problem into an unconstrained counterpart.
    Besides, although general state constraints could be tackled by the well-designed penalty functions \cite{na2018nonlinear,abu2006nonlinear}, which become dominant in the optimization process when possible constraint violation happens and thus punish potential dangerous behaviours, 
    no strict constraint satisfaction proofs are provided. However, in certain cases such as human-robot interaction, even the violation of constraints regarding safety issues in a small possibility is not allowed. Moreover, the trade-off between constraint satisfaction and performance is ignored in \cite{na2018nonlinear,abu2006nonlinear}, wherein control strategies that partially focus on constraint satisfaction or performance often result in a lack of practicability \cite{slotine1991applied}. 

    \emph{Single critic structure:} For actor-critic structure adopted in the aforementioned ADP related works,
    the interplay between actor and critic NN is likely to cause instability because a wrong step taken by either of an actor or critic learning agent might adversely affect the other and destabilize the learning process \cite{parisi2020reinforcement}. 
    To solve this potential instability problem, a single critic structure is adopted in \cite{heydari2012finite} where the approximated value function from the critic NN is directly used to construct the approximate optimal control policy. 
    The prerequisite of straightly using the approximated value function to construct the control policy is that the weight convergence of the estimated critic NN to the actual value is guaranteed.
    However, to the best of our knowledge, no strict proofs and/or simulation results have been provided to declare the guaranteed weight convergence to the actual value in most of single critic structure related works \cite{heydari2012finite,yang2018event,padhi2006single}. 
    Conventionally, the weight convergence is checked by the persistence of excitation (PE) condition \cite{boyd1986necessary}.
    Among existing ADP related works \cite{vamvoudakis2010online,bhasin2013novel}, the PE condition has been satisfied by directly adding external noises to inputs to achieve a sufficient exploration of the operation space. However, this method is unfavorable to practical applications given that the direct incorporation of external noises into control inputs may degrade control performance and cause nuisance, waste of energy, etc.
    Furthermore, in this method, the choice of explicit noise forms and the time to remove them during the online learning process highly depend on prior knowledge. 
    Hence, for the guaranteed weight convergence, a more practical and easily implemented method to satisfy the PE condition is needed.

    \emph{Experience replay:} Given the fact that the satisfaction of the PE condition implies that available data regarding unknown weights to be learned are rich enough during the entire learning period \cite{ioannou2012robust}, a feasible way to get the desired rich data is to reuse past data generated during the learning process. Recently experience replay (ER) emerges as an effective data-generating mechanism that replays experience data to accelerate the online learning process \cite{lin1992self}. 
    In ADP related works \cite{yang2019online,kamalapurkar2016model}, a fixed number of recently recorded transitions are replayed to actor or critic learning agents to avoid the necessity of using external noises to satisfy the required PE condition for the weight convergence.
    However, for this sequent way of data usage, experience data with different richness levels are used without discrimination, and partial informative data (e.g., data from initial exploring phases) are only used once then abandoned immediately. These characteristics of data usage result in poor sample efficiency and the collected data might not be rich enough for the weight convergence.
    Such a sample deficiency problem also exists in uniform sampling based ER techniques where all transitions are replayed at a same frequency regardless of their significance \cite{fedus2020revisiting}.
    Unlike replaying experience data in a sequent or an uniform sampling way, prioritized experience replay (PER) is a more efficient technique that prioritizes data according to certain criteria \cite{schaul2015prioritized}.
    Although all the aforementioned ER techniques have shown promise to accelerate learning, the implementation often accompanies with prior experience and extensive parameter tuning \cite{yang2019online,kamalapurkar2016model}. To the best of our knowledge, there exists no related works on principled ways to provide rich enough experience data to accelerate the online leaning process.
    \subsection{Contribution}
    This work focuses on an off-policy risk-sensitive RL-based control framework, as summarized in Fig.\ref{algorithm framework}, for the control task of a nonlinear system under additive disturbances, input and state constraints. 
    Based on pseudo controls, risk-sensitive input and state penalty terms, we transform a generally intractable constrained robust stabilization problem into an optimal control problem, which is then approximately solved by ADP under a single critic structure with an off-policy weight update law.
    The contribution of this paper is three-fold. 
    First, we propose novel risk-sensitive state penalty terms to act as risk criteria during the learning process, which enables us to tackle state constraints in a long time-horizon, and preserve performance with strict constraint satisfaction proofs.
    Second, by exploiting experience data, we design an efficient off-policy critic NN weight update law that guarantees weight convergence without causing undesirable oscillations and additional control effort expenditures. 
    Third, principled ways implemented as online and offline experience buffer construction algorithms are proposed to provide the required rich enough experience data for the weight convergence.
   
    The remainder of this article is organized as follows. Section \ref{Section 2} provides the formulation of a constrained robust stabilization problem, the transformation to an optimal control problem, and the problem equivalence proof. 
    Section \ref{Section 3} elucidates the approximated solution to the optimal control problem and the off-policy weight update law. Besides, the critic NN weight convergence proof and the system stability proof are provided.
    Simulation results shown in Section \ref{Section 4} illustrate the effectiveness of the proposed control framework.
    Finally, Section \ref{Section  5} concludes this paper.
    
    \emph{Notations:} Throughout this paper, $\mathbb{R}$ ($\mathbb{R}^{+}$) denotes the set of real (positive) numbers; $\mathbb{N}^{+}$ denotes the set of positive natural numbers;
    $\mathbb{R}^{n}$ is the Euclidean space of $n$-dimensional real vector; $\mathbb{R}^{n \times m}$ is the Euclidean space of $n \times m$ real matrices; $I_{m \times m}$ represents the identity matrix with dimension $m \times m$; 
    $\Int{\mathbb{D}}$ and $\partial \mathbb{D}$ denote the interior and boundary of the set $\mathbb{D}$, respectively;
    $\lambda_{\min}(M)$ and $\lambda_{\max}(M)$ are the maximum and minimum eigenvalues of a symmetric matrix $M$, respectively; $\diag {a_{1},...,a_{n}}$ is the $n \times n$ diagonal matrix with the value of main diagonal as $a_{1},...,a_{n}$;
    The $i$-th entry of a vector $x = [x_{1},...,x_{n}]^{\top}\in \mathbb{R}^{n}$ is denoted by $x_{i}$, and $\left\| x \right\| = \sqrt{\sum_{i=1}^{N}|x_{i}|^2}$ is the Euclidean norm of the vector $x$;
    The $ij$-th entry of a matrix $D \in \mathbb{R}^{n \times m}$ is denoted by $d_{ij}$, and $\left\|D\right\| = \sqrt{\sum_{i=1}^{n}\sum_{j=1}^{m}|d_{ij}|^2}$ is the Frobenius norm of the matrix $D$. For notational brevity, time-dependence is suppressed without causing ambiguity. 
    \begin{figure}[!t]  
    \centerline{\includegraphics[width=3.4in]{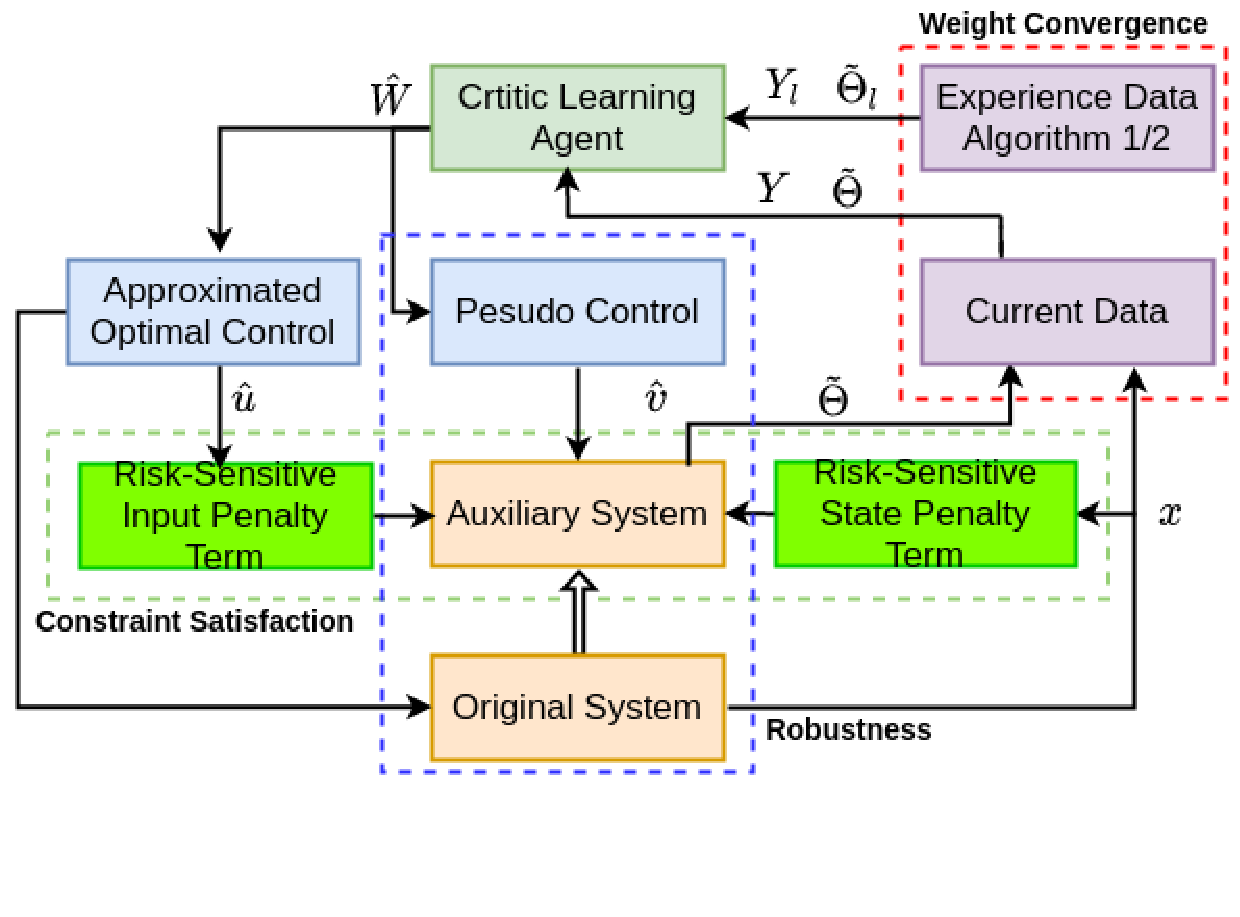}}
    \caption{Schematic of the proposed off-policy risk-sensitive RL-based control framework that contains three key components: 1) Robustness: pseudo controls to address additive disturbances in Section \ref{section PT}; 2) Constraint satisfaction: risk-sensitive input/state penalty terms to enforce input/state constraint satisfaction in Section \ref{safety and performance trade off}; 3) Weight convergence: rich enough experience data to accelerate online learning in Section \ref{Section 3}.}
    \label{algorithm framework}
    \end{figure}
    
    \section{Problem formulation} \label{Section 2}
    \subsection{Formulation of constrained robust stabilization problem} \label{section PF}
    Consider the continuous-time nonlinear dynamical system:
    \begin{equation} \label{original sys}
        \dot{x} = f(x)+g(x)u(x)+k(x)d(x),
    \end{equation}
    where $x \in \mathbb{R}^{n}$ and $u(x) \in \mathbb{R}^{m}$ are states and inputs of the system. $f(x) : \mathbb{R}^{n} \to \mathbb{R}^{n}$, $g(x) : \mathbb{R}^{n} \to \mathbb{R}^{n \times m}$ are the known drift and input dynamics, respectively. $k(x): \mathbb{R}^{n} \to \mathbb{R}^{n \times r}$ represents the known differential system function. $d(x): \mathbb{R}^{n} \to \mathbb{R}^{r}$ denotes the unknown additive disturbance. The general case that the additive disturbance is unmatched, i.e., $k(x) \neq g(x)$, is considered. Assuming that $f(0) = 0$ and $d(0)=0$, which means that the equilibrium point is $x = 0$.

    Before proceeding, the following assumptions are provided, which are common in ADP related works. 
        \begin{assumption} \cite{abu2005nearly} \label{bound of fg}
        $f(x)+g(x)u$ is Lipschitz continuous on a set $\Omega \subseteq \mathbb{R}^{n}$ that contains the origin, and the system is stabilizable on $\Omega$. 
        There exists $g_{M} \in \mathbb{R}^{+}$ such that the input dynamics is bounded by $\left\| g(x) \right\| \leq g_{M}$.
    \end{assumption}
    \begin{assumption} \cite{lin1998optimal} \label{bound of d and inverse term}
    The unknown additive disturbance $d(x)$ is bounded by a known nonnegative function $d_{M}(x)$: $\left\|d(x)\right\| \leq d_{M}(x)$, and $d_{M}(0) =0$. 
    \end{assumption}

    Based on the aforementioned settings, we formulate the constrained robust stabilization problem (CRSP) as follows.
    \begin{problem} [CRSP] \label{Robust constraint control problem}
    Given  Assumptions \ref{bound of fg}-\ref{bound of d and inverse term},
    design a control strategy $u(x)$
    such that the closed-loop system \eqref{original sys} that subjects to additive disturbances
    is stable under input saturation
    \begin{equation} \label{input saturation}
    \mathbb{U}_j = \left\{u_j \in \mathbb{R} : \left| u_{j} \right| \leq \beta \right\}, j = 1, \cdots, m, 
    \end{equation}
    where $\beta \in \mathbb{R}^{+}$ is a known saturation bound;
    and state constraints
    \begin{equation} \label{state constriant}
        \mathbb{X}_i = \left\{x \in \mathbb{R}^{n} : h_{i}(x) < 0 \right\}, i = 1, \cdots, n_c,
    \end{equation}
    where $\mathbb{X}_i$ is a closed and convex set that contains the origin in its interior; $h_{i}(x):\mathbb{R}^{n} \to \mathbb{R}$ is a known continuous function that relates with the $i$-th state constraint; $n_c \in \mathbb{N}^{+}$ is the number of considered state constraints.
    \end{problem}
    \subsection{Transformation to optimal control problem} \label{section PT}
    Problem \ref{Robust constraint control problem} can be decomposed into three sub-problems: disturbance rejection, input saturation and state constraint. It is not trivial for ADP to directly deal with these sub-problems together \cite{liu2020editorial}. 
    Thus, in this section, with pseudo controls proposed in \cite{lin1998optimal}, reformulated risk-sensitive input penalty terms based on \cite{abu2005nearly}, and our newly designed risk-sensitive state penalty terms, we firstly transform the CRSP clarified as Problem \ref{Robust constraint control problem} into an equivalent optimal control problem, and then attempt to solve the aforementioned sub-problems simultaneously under an optimization framework.
    
    \subsubsection{Pseudo control and auxiliary system}
    To address the additive disturbance under an optimization framework, firstly, by following \cite{lin1998optimal}, we decompose $k(x)d(x)$ as a sum of matched and unmatched disturbance elements
    \begin{equation} \label{uncertainty decomposition}
        k(x)d(x) = g(x)g^{\dagger}(x)k(x)d(x)+h(x)d(x),
    \end{equation}
    where $h(x) = (I-g(x)g^{\dagger}(x))k(x) : \mathbb{R}^{n} \to \mathbb{R}^{n \times r}$, and $\dagger$ denotes the Moore-Penrose inverse.
    \begin{assumption}\cite{lin1998optimal}\label{bound of decomposed disturbance part}
    The continuous function $h(x)$ is bounded as $\left\|h(x)\right\| \leq h_{M}$;
      $g^{\dagger}(x)k(x)d(x)$ is bounded by a nonnegative function $l_{M}(x)$ : $\left\|g^{\dagger}(x)k(x)d(x)\right\| \leq l_{M}(x)$, and $l_{M}(0) =0$.    
    \end{assumption}
    Then, a pseudo control $v(x): \mathbb{R}^{n} \to \mathbb{R}^{r}$ is introduced here, which is used to tackle the unmatched disturbance $h(x)d(x)$ in \eqref{uncertainty decomposition}, and the resulting auxiliary system follows \cite{lin1998optimal}
    \begin{equation} \label{auxiliary system}
    \dot{x} = f(x)+g(x)u(x)+h(x)v(x).
    \end{equation}
    By focusing on the auxiliary system \eqref{auxiliary system} and incorporating the squared matched/unmatched disturbance bound (i.e, $l^2_M$ and $d^2_M$) into the cost function \eqref{cost fuction},  we can address the additive disturbance under an optimization framework. The corresponding proofs are provided later in Theorem \ref{theorem for equivalence}.
    \subsubsection{Risk-sensitive input and state penalty terms}
    To tackle input/state constraints under an optimization framework, here we follow the idea of risk-sensitive RL where multiple risk measures, e.g, high moment or conditional value at risk, are used to deal with constraints of Markov decision processes \cite{shen2013risk}.
    However, the available risk measures in the risk-sensitive RL field cannot guarantee strict constraint satisfaction and/or not efficient (even inappropriate) to address constraints of continuous nonlinear systems. 
    Thus, we propose risk-sensitive input penalty term (RS-IP) in Definition \ref{risk-sensitive CIPF} and risk-sensitive state penalty term (RS-SP) in Definition \ref{risk-sensitive SPF} as new risk measures during the learning process to enforce strict satisfaction of input/state constraints of continuous nonlinear systems. 
    \begin{definition}[RS-IP]\label{risk-sensitive CIPF}
     A continuous and differential function $\phi (u)$ is a risk-sensitive input penalty term if it has the following properties:
     
     (1) A bounded monotonic odd function with $\phi(0) =0$;
     
     (2) The first-order partial derivatives of $\phi (u)$ is bounded.
    \end{definition}
    Here the RS-IP term is a reformulation of the nonquadratic functional used in \cite{abu2005nearly,heydari2012finite} to confront input constraints.
    \begin{definition}[RS-SP]\label{risk-sensitive SPF}
     Given the closed region $\mathbb{X}_i$, $i = 1, \cdots, n_c$, defined as \eqref{state constriant}, a continuous scalar function $S_i(x) : \mathbb{X}_i \to \mathbb{R}$, $i = 1, \cdots, n_c$, is a risk-sensitive state penalty term if the following proprieties hold:
     
    (1) $S_i(0) = 0$, and $S_i(x) > 0, \forall x \ne 0$; 
     
     (2) $S_i(x) \to \infty$ if $x$ approaches $\partial \mathbb{X}_i$; 
     
     (3) For initial value $x(0) \in \Int{\mathbb{X}_i}$, there exists $s \in \mathbb{R}^{+}$ such that $S_i(x(t))\leq s, \forall t \geq 0$ along solutions of the dynamics.
    \end{definition}
    Comparing with similar works \cite{na2018nonlinear,abu2006nonlinear} that use state penalty functions  to tackle state constraints but without strict constraint satisfaction proofs,
    the proposed RS-SP term enables us to provide strict constraint satisfaction proofs in Theorem \ref{theorem for equivalence}.
    Here the novel RS-SP term is inspired by the so-called barrier Lyapunov function \cite{tee2009barrier}.
    The first point of Definition \ref{risk-sensitive SPF} denotes that $S_i(x)$ is an effective Lyapunov function candidate, which enables $S_i(x)$ to serve as part of Lyapunov function for the system stability proof. The last two points imply that $\inf_{x \to \partial \mathbb{X}_i }S_i(x) = \infty$ and $\inf_{x \in \Int{\mathbb{X}_i}}S_i(x) \geq 0$, which means that $S_i(x)$ serves as a barrier certificate for an allowable operating region $\mathbb{X}_i$. 
    \subsubsection{Optimal control problem}
    Based on the auxiliary system \eqref{auxiliary system} and Definitions \ref{risk-sensitive CIPF}-\ref{risk-sensitive SPF}, an equivalent optimal control problem (OCP) of CRSP in Problem \ref{Robust constraint control problem} is clarified as Problem \ref{Optimal control problem}. Comparing with traditional ADP that accomplishes partial objectives of performance, robustness, and input/state constraint satisfaction \cite{liu2020editorial,lewis2009reinforcement,abu2005nearly,na2018nonlinear}, the applied problem transformation here enables us to consider such multiple objectives together.
    \begin{problem} [OCP] \label{Optimal control problem}
    Given Assumptions \ref{bound of fg}-\ref{bound of decomposed disturbance part}, consider the auxiliary system \eqref{auxiliary system},
    find $u(x)$ and $v(x)$ to minimize the cost function
    \begin{equation}\label{cost fuction}
    V(x(t)) = \int_{t}^{\infty} r(x(\tau),u(x(\tau)),v(x(\tau)))\,d\tau,
    \end{equation}
    where the utility function $r(x,u(x),v(x)) = r_{d}(x) + \rho v^{\top}(x)v(x) + r_{c}(x,u(x))$ with $\rho \in \mathbb{R}^{+}$, $r_{d}(x) = l^{2}_{M}(x) + \rho d^{2}_{M}(x)$, and $r_{c}(x,u(x)) =  \mathcal{W}(u(x))+\mathcal{L}(x)$.
    The input penalty function $\mathcal{W}(u(x))$ is defined as 
    \begin{equation}\label{control penalty function}
    \mathcal{W}(u(x)) =  \sum_{j=1}^{m}2\int_{0}^{u_{j}}  \beta R_j \phi^{-1}(\vartheta_j / \beta)  \,d\vartheta_j,
    \end{equation}
    where $\phi (\cdot)$ is the RS-IP term in Definition \ref{risk-sensitive CIPF}; $R_j$ is the $j$-th diagonal element of a positive definite diagonal matrix $R \in \mathbb{R}^{m \times m}$.
    The state penalty function $\mathcal{L}(x)$ is defined as 
    \begin{equation}\label{state penalty function}
    \mathcal{L}(x) = x^{\top} Q x + \sum_{i=1}^{n_c} k_{i} S_i(x),
    \end{equation}
    where $Q \in \mathbb{R}^{n \times n}$ is a positive definite matrix; $k_{i}$ is the risk sensitivity parameter that follows $k_{i} = 1/(1+d_i^2)$, where $d_i$ is the distance from the state $x$ to the boundary of $h_{i}(x)$; $S_i(\cdot)$ is the RS-SP term in Definition \ref{risk-sensitive SPF} for $i$-th state constraint.
    \end{problem}
    
    Unlike ADP related works \cite{abu2005nearly,na2018nonlinear} that incorporate nonquadratic functionals to tackle input saturation but without considering performance regarding control efforts,
    by choosing suitable matrix $R$, $\mathcal{W}(u(x))$ in \eqref{control penalty function} could take into consideration of requirements for both control limits and control energy expenditures.
    Much more details are introduced in Section \ref{Mechanism of input penalty function}.
    The common used risk-neutral quadratic function $x^{\top}Qx$ \cite{liu2020editorial,lewis2009reinforcement} (capturing the desired state performance) is augmented with the newly designed weighted RS-SP term $\sum_{i=1}^{n_c} k_{i} S_i(x)$ (addressing multiple state constraints) to construct $\mathcal{L}(x)$ in \eqref{state penalty function}, which enables us to consider state-related performance and constraint satisfaction together. 
     The incorporation of $S_i(x)$ into $\mathcal{L}(x)$  deteriorates the desired performance represented by $x^{\top}Qx$.  Therefore, we propose the risk sensitivity parameter $k_i$, which relates with the distance from the constraint boundary, to specify the inevitable trade-off between the state-related performance and constraint satisfaction during the learning process. Note that this kind of  trade-off is ignored in existing related works \cite{na2018nonlinear,abu2006nonlinear}.
    The detailed mechanism of $\mathcal{L}(x)$ is illustrated in Section \ref{Mechanism of state penalty function}.

    \subsection{Mechanism of input and state penalty functions} \label{safety and performance trade off}
    The mechanism of $\mathcal{W}(u(x))$ and $\mathcal{L}(x)$ to enable the learning process to preserve performance without violating strict input/state constraint satisfaction is clarified here.
    \subsubsection{Mechanism of input penalty function $\mathcal{W}(u(x))$} \label{Mechanism of input penalty function}
    By Definition \ref{risk-sensitive CIPF}, the explicit form of the RS-IP term is chosen as $\phi (\cdot) = \tanh(\cdot)$ \cite{abu2005nearly}.
    Given the inevitable trade-off between constraint satisfaction and performance regarding control inputs, $\mathcal{W}(u(x))$ are designed to counter input constraints \eqref{input saturation} and approximate $u^{\top}\Bar{R}u$ (a common desired performance criterion for control efforts) simultaneously, where $\Bar{R} \in \mathbb{R}^{m \times m}$ is a positive definite matrix designed according to users' preferences. 
    \begin{figure*}[!t]
    \centering
    \subfloat[Plotting of $z_{1}=\beta R_j \tanh^{-1}(u_j/\beta)$, $R_j$ =1]{\includegraphics[width=2.4in]{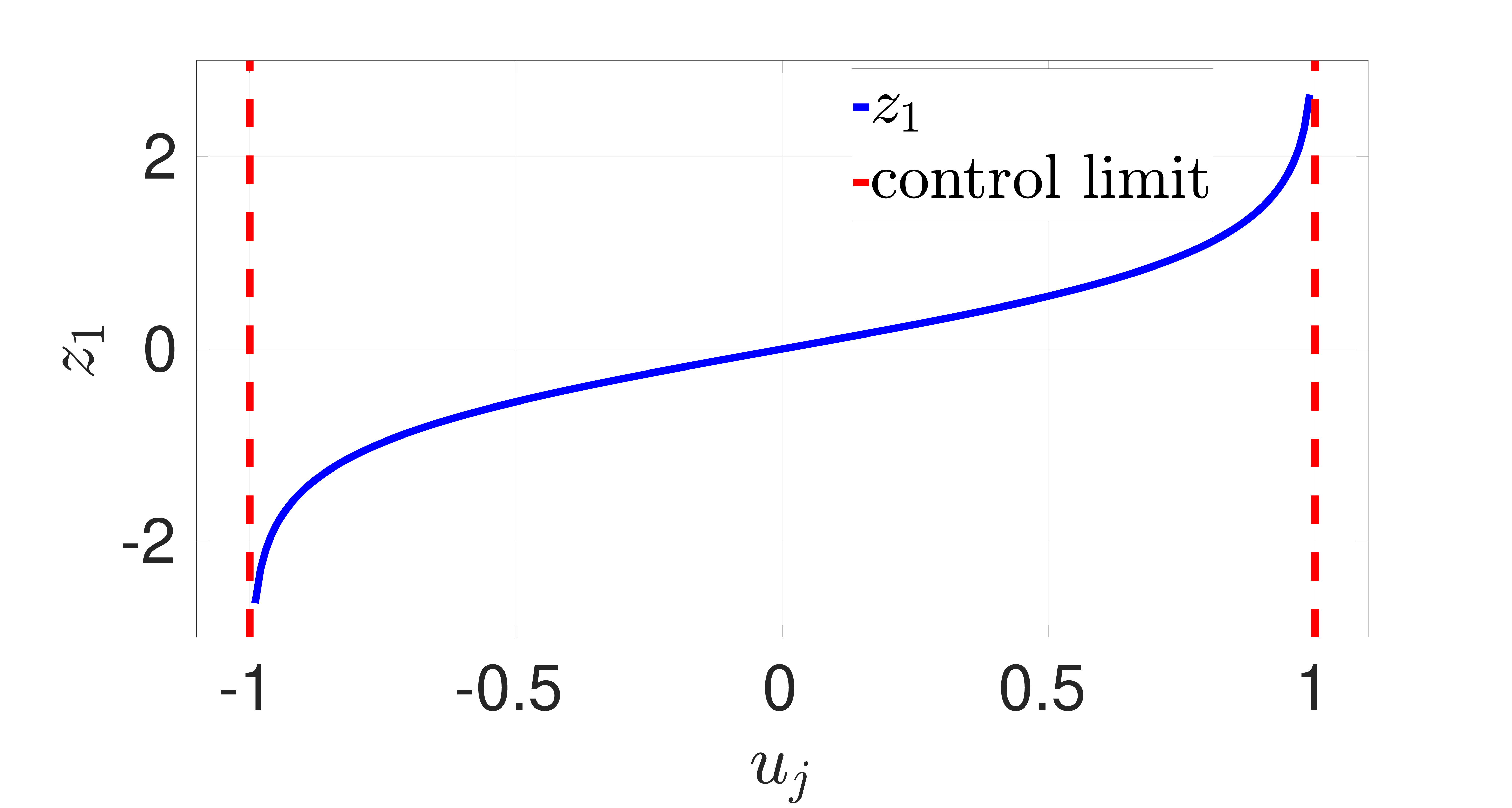}
    \label{Wuinverse}}
    \subfloat[Plotting of $u_j=-\beta \tanh(z_{2})$.]{\includegraphics[width=2.4in]{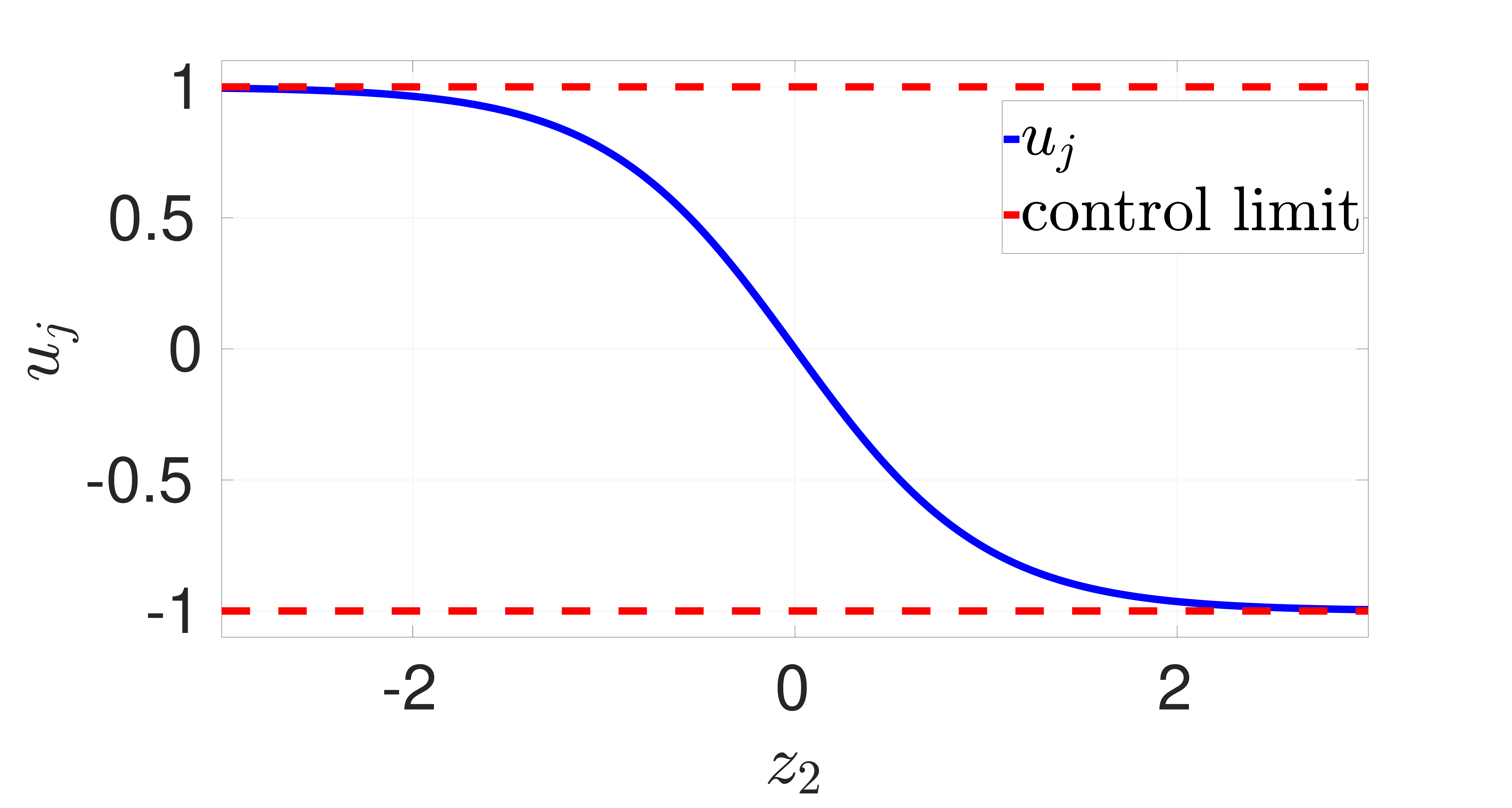}%
    \label{Wutanh}}
    \subfloat[Plotting of $u_j^{\top}\Bar{R}_j u_j$ and $\mathcal{W}_j(u_j)$ .]{\includegraphics[width=2.4in]{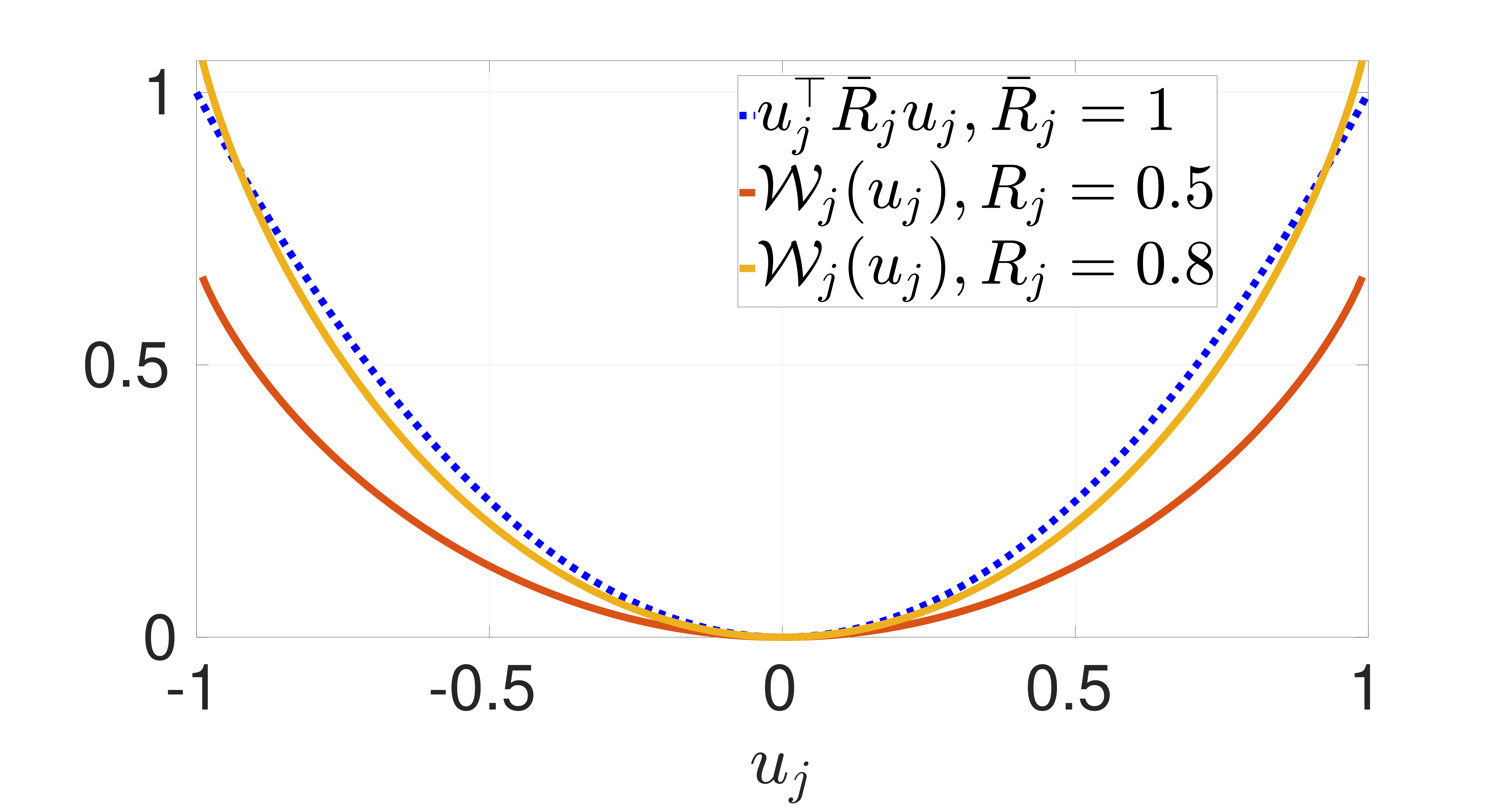}%
    \label{w and quadtratc u}}
    \caption{Graphical illustration of the working scheme of the input penalty function $\mathcal{W}(u(x))$ with $\beta = 1$, $z_{1}, z_{2} \in \mathbb{R}$, $u_j \in (-1,1)$.} 
    \label{w_u}
    \end{figure*}
    The mechanism of $\mathcal{W}(u(x))$ to tackle input constraints can be clarified from two perspectives, see Fig.\ref{Wuinverse} and Fig.\ref{Wutanh}, respectively. 
    In the first perspective, input constraints are considered in a long time-horizon. 
    $\mathcal{W}(u(x))$ in \eqref{control penalty function} is an integration of $\beta R_j\tanh^{-1}(u_j/\beta)$ that is denoted as $z_1$ in Fig.\ref{Wuinverse}.
    When any $u_j$, $j=1,\cdots,m$, approaches to the input constraint boundaries $\pm \beta$, it follows that the value of $\mathcal{W}(u(x))$ will be infinity. 
    Since the optimization process aims to minimize the cost function, the resulting optimal control strategy will be away from $\pm \beta$; Otherwise, a high value of the cost function occurs. 
    From the other perspective, according to the later result in \eqref{optimal u}, the resulting optimal control strategy based on $\mathcal{W}(u(x))$ is in a form of $\tanh(\cdot)$ whose boundness enforces strict satisfaction of input constraints, as shown in Fig.\ref{Wutanh}.
    The construction of $\mathcal{W}(u(x))$ to reflect the desired performance for control energy is shown in
    Fig.\ref{w and quadtratc u}. Consider $\mathcal{W}_j(u_j)$, the $j$-th summand of $\mathcal{W}(u(x))$. It follows that
    \begin{equation} \label{integration of W}
        \mathcal{W}_j(u_j) = 2\beta R_j u_j \tanh^{-1}(u_j/\beta)+\beta^2 R_j \log(1-u_j^2/\beta^2).
    \end{equation}
    As displayed in Fig.\ref{w and quadtratc u}, by adjusting the value of $R_j$, $\mathcal{W}_j(u_j)$ can approximate the desired control energy criterion $u^{\top}_j\Bar{R}u_j$ well. Based on the above discussion, we know that $\mathcal{W}(u(x))$ in \eqref{control penalty function} can tackle input constraints while preserving performance concerning control energy expenditures.
    \subsubsection{Mechanism of state penalty function $\mathcal{L}(x)$} \label{Mechanism of state penalty function}
    According to Definition \ref{risk-sensitive SPF}, when a potential state constraint violation happens, the corresponding RS-SP term will approach to infinity. Since the optimal control strategy aims to minimize the total cost, states will be pushed away from the direction where a high value of the RS-SP based $\mathcal{L}(x)$ occurs. Thus, the state constraint violation is avoided. 
    To satisfy Definition \ref{risk-sensitive SPF}, we choose $S_i(x) = \log{(h_{i}(x))}$ here.
    Note that the explicit form of $\log(h_{i}(x))$ is adjusted based on given state constraints, which is exemplified later. 
    For a better explanation of the mechanism of the RS-SP term $S_i(x)$ and the corresponding risk sensitivity parameter $k_{i}$, we present a four-dimensional system example with safe region defined as
    $\mathbb{X}_1 = \left\{x_1,x_2 \in \mathbb{R} : h_1(x_1,x_2) = x^2_{1}+x^2_{2}-1 < 0 \right\}$ \cite{wang2018safe}, $\mathbb{X}_2 = \left\{x_3 \in \mathbb{R} : h_2(x_3) = \left| x_3\right|-2 < 0 \right\}$, and $\mathbb{X}_3= \left\{x_4 \in \mathbb{R} : h_3(x_4) = \left| x_4 \right|-3 < 0 \right\}$ \cite{yang2019online}.
    The corresponding RS-SP terms are designed as $S_1(x_1,x_2) = \log (\alpha(x_{2})/(\alpha(x_{2})-x^2_{1}))$ with $\alpha(x_{2}) = 1-x^2_{2}$, $S_2(x_3) = \log (4/(4-x^2_{3}))$, and $S_3(x_4)=\log (9/(9-x^2_{4}))$, respectively.
     As displayed in Fig.\ref{circle RS-SP}-\ref{rectangle RS-SP}, these RS-SP terms act as barriers at constraint boundaries and confine states remain in the safe regions. This inherent risk-sensitive property enables us to tackle state constraints under an optimization framework.
    As long as initial states lie in the safe regions and the cost function is always bounded as time evolves, the subsequent state evolution will be restricted to the safe regions.
     From Fig.\ref{circle RS-SP k}-\ref{rectangle RS-SP k}, we know that the role of $S_i(x)$ will be discouraged by $k_{i}$ when states are far away from the boundary of $h_{i}(x)$. Therefore, performance regrading states is maintained when no state constraint violation occurs. 
    \begin{figure*}[!t]
    \centering
    \subfloat[Plotting of $S_1(x_1,x_2)$.]{\includegraphics[width=3.5in]{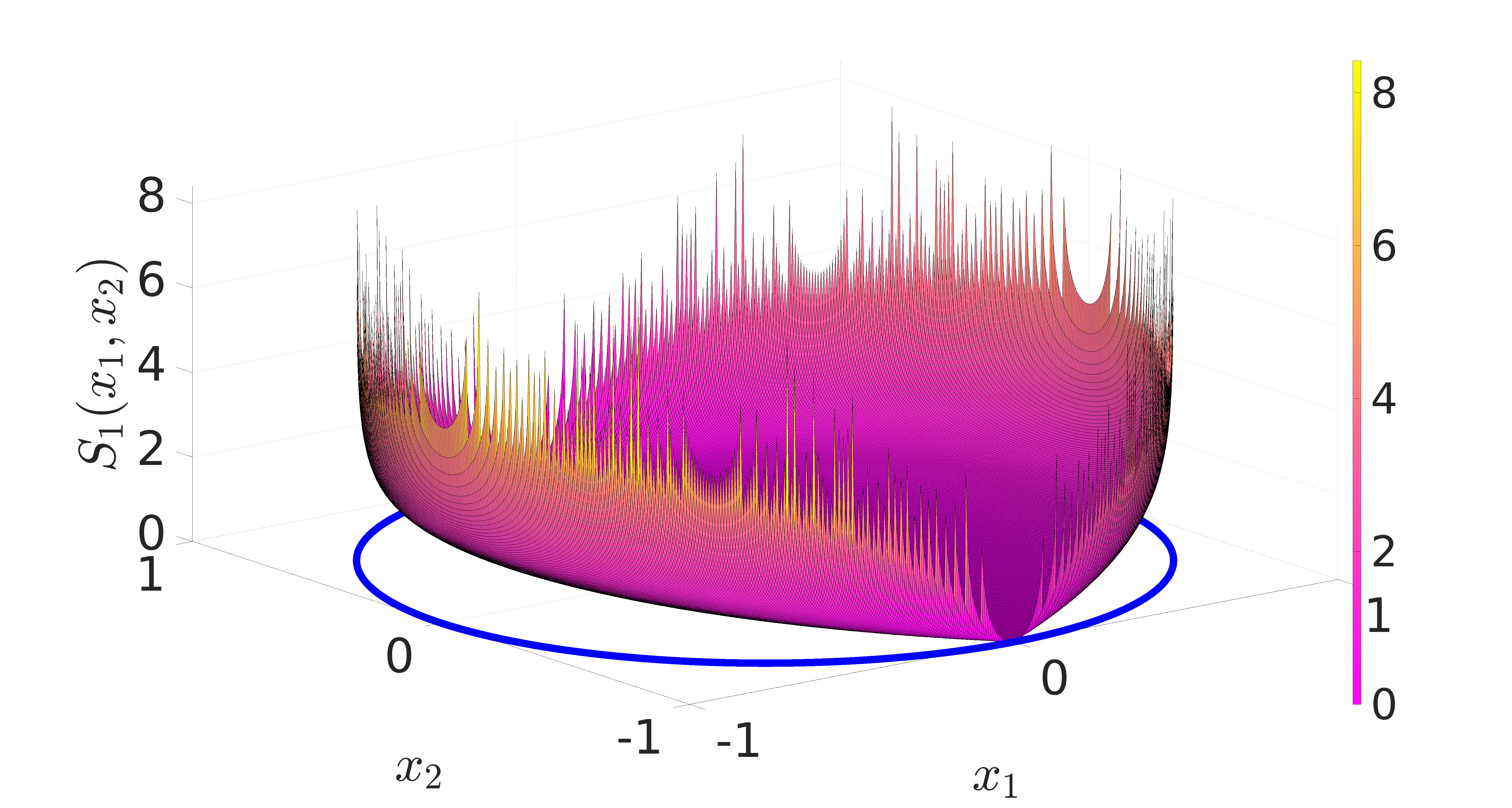}%
    \label{circle RS-SP}}
    \subfloat[Plotting of the sum of $S_2(x_3)$ and $S_3(x_4)$.]{\includegraphics[width=3.5in]{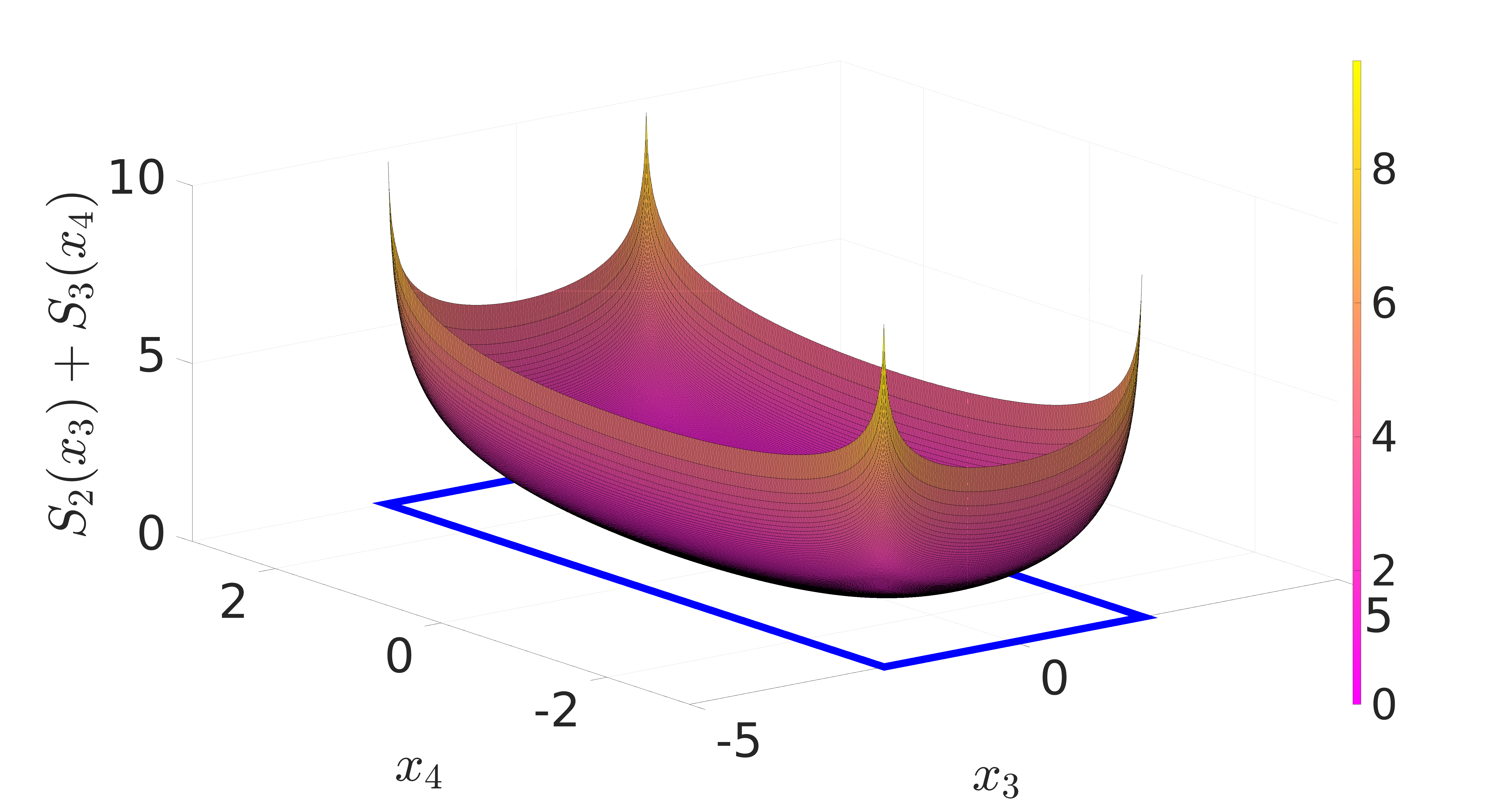}%
    \label{rectangle RS-SP}}
    \\
    \subfloat[Plotting of $k_1$ for $S_1(x_1,x_2)$.]{\includegraphics[width=3.5in]{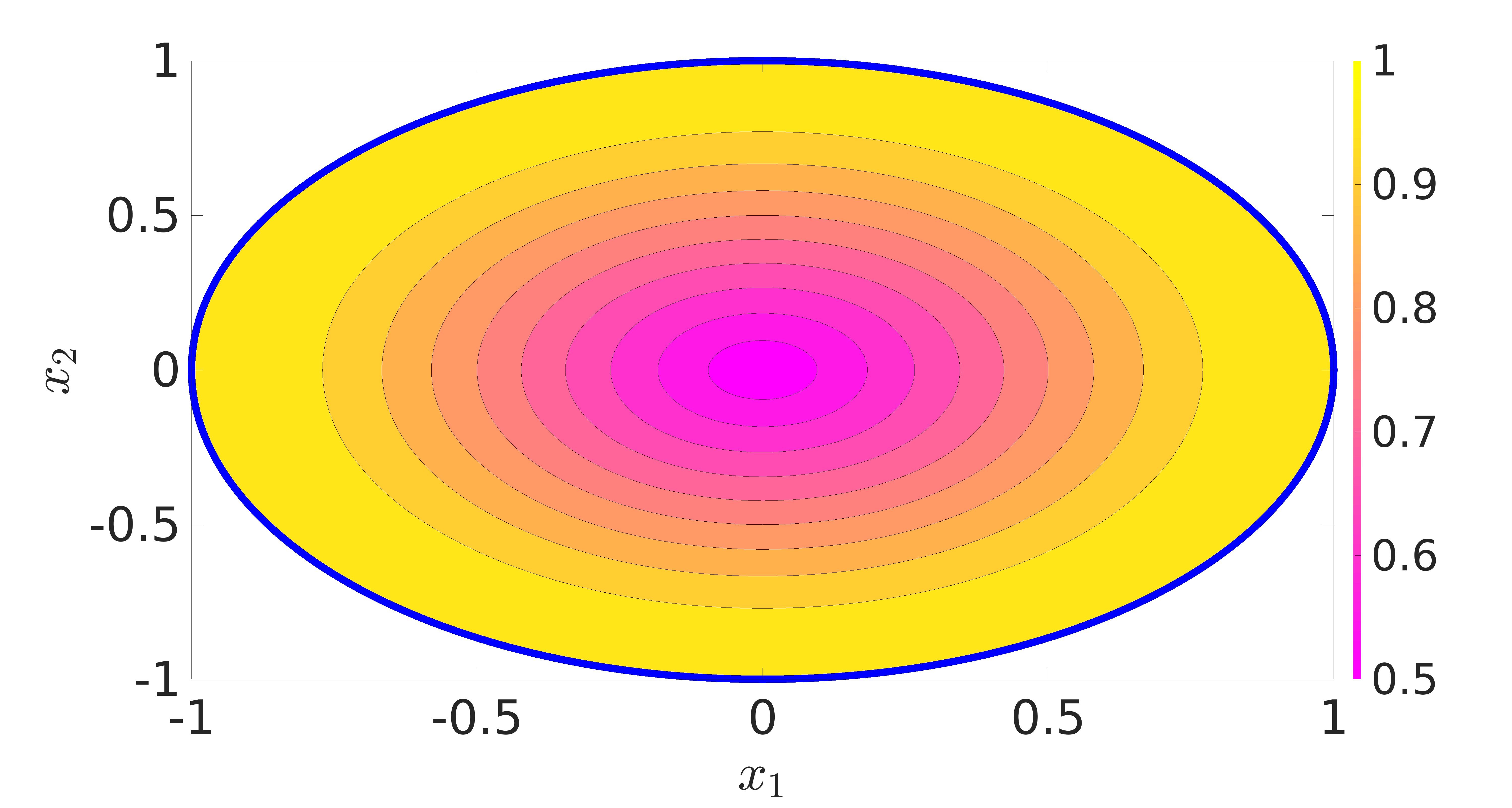}%
    \label{circle RS-SP k}}
    \subfloat[Plotting of $k_2$ for $S_2(x_3)$ (left) and $k_3$ for $S_3(x_4)$ (right).]{\includegraphics[width=3.5in]{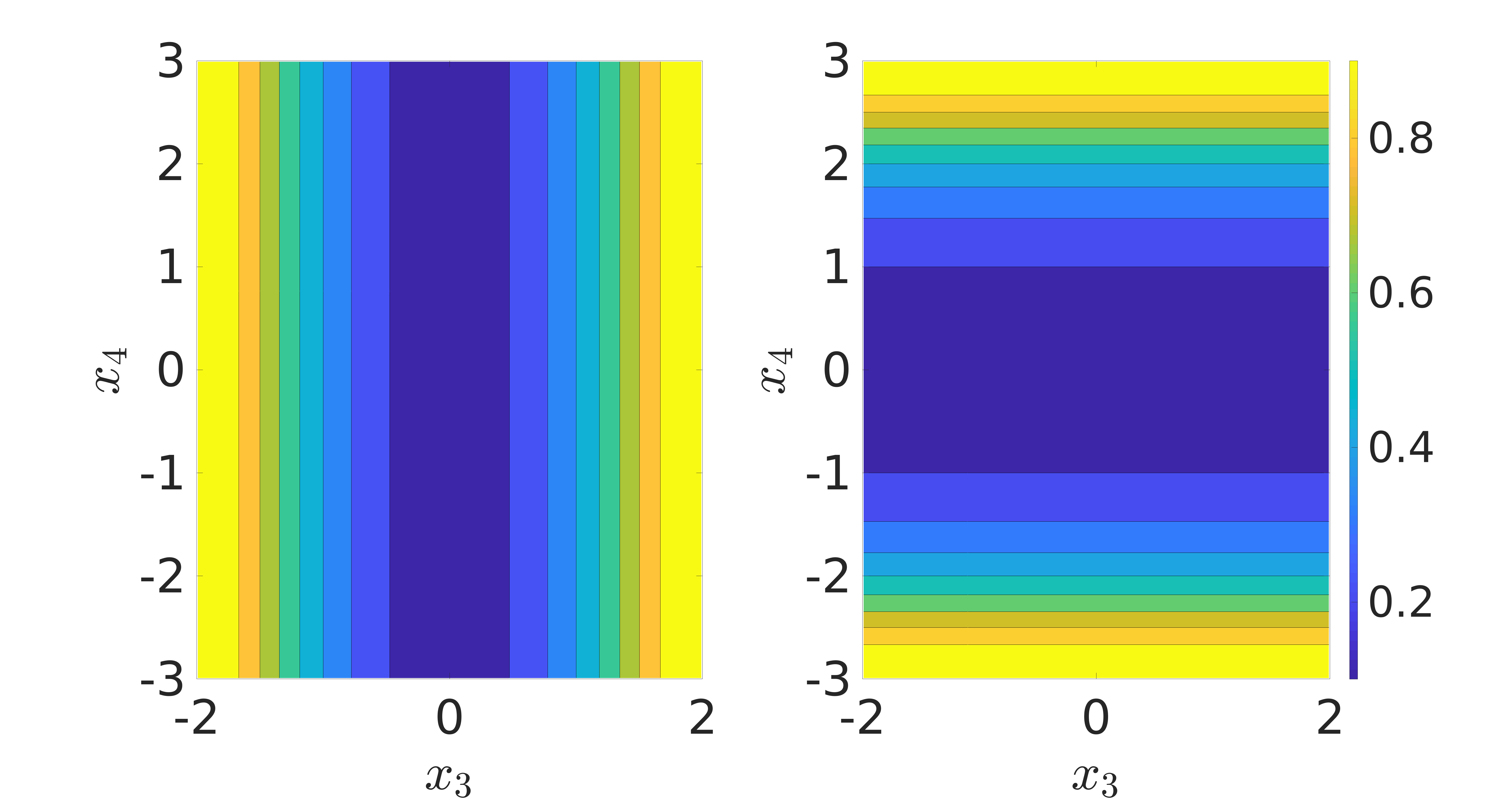}%
    \label{rectangle RS-SP k}}
    \caption{Graphical illustration of the working scheme of the RS-SP terms $S_1(x_1,x_2)$, $S_2(x_3)$,  $S_3(x_4)$ and their corresponding risk sensivity parameters $k_1$, $k_2$ and $k_3$.}
    \label{state penalty comparision}
    \end{figure*}
    
    \subsection{HJB equation for the OCP}
    Aiming at the transformed OCP in Problem \ref{Optimal control problem}, for any admissible control policies $\mu,\nu \in \Psi (\Omega)$ where $\Psi (\Omega)$ is the admissible control set \cite[Definition 1]{abu2005nearly},
    the associated optimal cost function follows
    \begin{equation} \label{optimal cost function}
    V^{*}(x(t)) = \min_{\mu,\nu \in \Psi (\Omega)}\int_{t}^{\infty} r(x(\tau),\mu(x(\tau)),\nu(x(\tau)))\,d\tau.
    \end{equation}
    Then, the HJB equation follows
    \begin{equation} \label{HJB equation}
    \begin{aligned}
   & H(x,\mu^{*}(x),\nu^{*}(x),\nabla V^{*}) = r(x,\mu^{*}(x),\nu^{*}(x) ) \\
    &+ \nabla V^{{*}^{T}}(f(x)+g(x)\mu^{*}(x)+h(x)\nu^{*}(x)) =0,
    \end{aligned}
    \end{equation}
    where the operator $\nabla$ denotes the partial derivative with regard to $x$, i.e., $\partial / \partial x $. 
    
    Assuming that the minimum of \eqref{optimal cost function} exits and is unique \cite{vamvoudakis2010online}. Based on the HJB equation \eqref{HJB equation}, the closed forms of optimal control policies $\mu^*(x)$ and $\nu^*(x)$ are obtained as \cite{abu2005nearly}
    \begin{equation} \label{optimal u}
    \mu^{*}(x) = - \beta \tanh(\frac{1}{2\beta}R^{-1}g^{\top}(x)\nabla V^{*}), 
    \end{equation}
    \begin{equation} \label{optimal v}
    \nu^{*}(x) = - \frac{1}{2\rho} h^{\top}(x)\nabla V^{*}.
    \end{equation}
    \subsection{Equivalence between Problem \ref{Robust constraint control problem} and Problem \ref{Optimal control problem}}\label{proof of problem equivalence}
    Here we defer a detailed explanation of the method to get the optimal control policies \eqref{optimal u} and \eqref{optimal v} in Section \ref{Section 3}, and focus now on the proof of equivalence between Problem \ref{Robust constraint control problem} in Section \ref{section PF} and Problem \ref{Optimal control problem} in Section \ref{section PT}.
    Comparing with the result provided in \cite{lin1998optimal} that merely considers additive disturbances, as shown in Theorem \ref{theorem for equivalence}, the additional consideration of input and state constraints further complicates the theoretical analysis.
     \begin{theorem}\label{theorem for equivalence}
     Consider the system described by \eqref{original sys} and controlled by the optimal control policy \eqref{optimal u}. 
     Suppose Assumptions \ref{bound of fg}-\ref{bound of decomposed disturbance part} hold and the initial states and control inputs lie in the predefined constraint satisfying sets \eqref{input saturation} and \eqref{state constriant}. The optimal control policy \eqref{optimal u} guarantees robust stabilization of the system \eqref{original sys} without violating input constraint \eqref{input saturation} and state constraint \eqref{state constriant}, if there exists a scalar $\epsilon_{s} \in \mathbb{R}^{+}$ such that the following inequality is satisfied
     \begin{equation} \label{equalivance condition}
    \mathcal{L}(x) > 2\rho {\nu^*}^{\top}(x)\nu^{*}(x)+\epsilon_{s}.
    \end{equation}
    \end{theorem}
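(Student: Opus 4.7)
The plan is to use the optimal value function $V^*(x)$ as a Lyapunov function candidate for the original perturbed system \eqref{original sys} and establish three things in sequence: closed-loop robust stability, strict input constraint satisfaction, and strict state constraint satisfaction. Since the HJB equation \eqref{HJB equation} and the optimal policies \eqref{optimal u}, \eqref{optimal v} were derived for the auxiliary system \eqref{auxiliary system}, the central computational step is to differentiate $V^*$ along trajectories of the \emph{original} system \eqref{original sys}, expand the disturbance via the decomposition \eqref{uncertainty decomposition}, and then reconcile the extra disturbance contributions against the penalty structure of the utility function.

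First I would compute
\begin{equation*}
\dot V^* = \nabla V^{*\top}\bigl(f(x)+g(x)\mu^*(x)+g(x)g^\dagger(x)k(x)d(x)+h(x)d(x)\bigr),
\end{equation*}
and then invoke \eqref{HJB equation} to replace $\nabla V^{*\top}(f+g\mu^*+h\nu^*)$ by $-r(x,\mu^*,\nu^*)$. This leaves $\dot V^* = -r(x,\mu^*,\nu^*) + \nabla V^{*\top}h(d-\nu^*) + \nabla V^{*\top}gg^\dagger k d$. Using the closed-form \eqref{optimal v} to write $\nabla V^{*\top}h = -2\rho\,\nu^{*\top}$, the first cross term becomes $-2\rho\nu^{*\top}(d-\nu^*)$, which I would dominate by Young's inequality to pick up $\rho\,d_M^2(x)$ plus a multiple of $\nu^{*\top}\nu^*$. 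For the matched cross term, an analogous Young-type split combined with Assumption \ref{bound of decomposed disturbance part} on $\|g^\dagger(x)k(x)d(x)\|\leq l_M(x)$ absorbs the disturbance into $l_M^2(x)$, leaving a residue that is nonpositive by concavity/monotonicity of $\tanh$ (or equivalently, controllable by $\mathcal W(\mu^*)$). Crucially, $r_d(x)=l_M^2(x)+\rho\,d_M^2(x)$ was inserted into the utility function precisely to cancel these worst-case disturbance bounds.

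After the bookkeeping, the upper bound reduces to $\dot V^* \leq -\mathcal L(x)-\mathcal W(\mu^*)+2\rho\,\nu^{*\top}(x)\nu^*(x)$; invoking hypothesis \eqref{equalivance condition} then yields $\dot V^*\leq -\epsilon_s-\mathcal W(\mu^*)<0$ for $x\neq 0$, i.e.\ asymptotic robust stabilization. Input-constraint satisfaction is immediate from the form of \eqref{optimal u}: since $|\tanh(\cdot)|<1$ componentwise, each $|\mu_j^*(x)|<\beta$ strictly. For the state constraints, I would argue that the decay of $V^*$ keeps $V^*(x(t))\leq V^*(x(0))<\infty$ along the trajectory; because $r(\cdot)$ is nonnegative, each barrier term $k_i S_i(x(\tau))$ entering the cost integral stays bounded, and by property (2) of Definition \ref{risk-sensitive SPF}, $S_i(x)\to\infty$ as $x\to\partial\mathbb{X}_i$, so the trajectory cannot reach $\partial\mathbb{X}_i$ for any $i$.

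The main obstacle I foresee is the matched-disturbance cross term $\nabla V^{*\top}gg^\dagger k d$: unlike the unmatched case, it does not simplify through $\nu^*$, and the saturated form \eqref{optimal u} of $\mu^*$ makes it awkward to express $\nabla V^{*\top}g$ in a quantity that is cleanly absorbed by $\mathcal W(\mu^*)$ and by $l_M^2(x)$ simultaneously. Calibrating the Young-type splitting so that its coefficients match exactly the $l_M^2$ and $\rho d_M^2$ already embedded in $r_d(x)$, without opening a sign-destroying gap, is the delicate bookkeeping of the proof; once that inequality is tight, \eqref{equalivance condition} and the barrier/boundedness properties of $S_i$ and $\tanh$ deliver the three desired conclusions in short order.
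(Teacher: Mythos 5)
Your proposal follows essentially the same route as the paper's proof: $V^*$ as the Lyapunov candidate, the matched/unmatched disturbance decomposition, substitution of the HJB equation, the gradient identities from \eqref{optimal u} and \eqref{optimal v}, Young's inequality on both cross terms, and the $\tanh$-boundedness and barrier arguments for input and state constraint satisfaction. The one slip is your claim that the matched-term residue is nonpositive (or fully absorbed by $\mathcal{W}(\mu^*)$): after the Young split and the explicit integration of $\mathcal{W}(\mu^*)$ there remains $\beta^2\sum_{j}(R_j^2-R_j)(\tanh^{-1}(\mu_j^*(x)/\beta))^2$ plus a bounded integral remainder, which is exactly what the hypothesis's $\epsilon_s$ is introduced to absorb, so the achievable bound is $\dot V^*\le -\mathcal{L}(x)+2\rho\,{\nu^*}^{\top}(x)\nu^*(x)+\epsilon_s$ rather than one retaining $-\mathcal{W}(\mu^*)$.
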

    \begin{proof}
    See Appendix \ref{Theorem 1}.
    \end{proof}
    It has been proven in Theorem \ref{theorem for equivalence} that the CRSP in Problem \ref{Robust constraint control problem} is equivalent to the OCP in Problem \ref{Optimal control problem} under the inequality \eqref{equalivance condition}. 
    Thus, in order to solve the CRSP, the current task is to obtain the optimal control law \eqref{optimal u} by focusing on the OCP, which is clarified in details in the next section.
    
    \section{Approximate solutions to the OCP} \label{Section 3} 
    To get approximated solutions to the OCP, instead of introducing a common actor-critic structure used in \cite{abu2005nearly,vamvoudakis2010online}, here we adopt a single critic structure which enjoys lower computation complexity \cite{padhi2006single}. Furthermore, unlike common methods that add external noises into inputs to satisfy the required PE condition for the critic NN weight convergence \cite{vamvoudakis2010online,bhasin2013novel}, here we exploit experience data to construct the off-policy weight update law to achieve a sufficient excitation required for the critic NN weight convergence.
    Additionally, an online PER algorithm and an offline experience buffer construction algorithm are proposed as principled ways to provide the aforementioned rich enough experience data.
    
    \subsection{Value function approximation}
    According to the Weierstrass high-order approximation theorem \cite{finlayson2013method}, there exists a weighting matrix $W^{*}\in  \mathbb{R}^{N}$ such that the value function is approximated as  
    \begin{equation}\label{optimal V approximation}
    V^{*}(x) = {W^*}^{\top} \Phi(x) + \epsilon(x),
    \end{equation}
    where $\Phi(x) :  \mathbb{R}^{n} \to \mathbb{R}^{N}$ is the NN activation function, and $\epsilon(x) \in \mathbb{R}$ is the approximation error. $N$ is the number of NN activation functions. The derivative of  $V^{*}(x)$ follows 
    \begin{equation}\label{optimal dV approximation}
    \nabla V^{*}(x) = \nabla \Phi^{\top}(x)W^{*}+\nabla\epsilon(x),
    \end{equation}
    where $\nabla \Phi \in \mathbb{R}^{N \times n}$ and $\nabla\epsilon(x) \in \mathbb{R}^{n}$ are partial derivatives of $\Phi(x)$ and $\epsilon(x)$, respectively.
     As $N \to \infty$, both $\epsilon(x)$ and $\nabla\epsilon(x)$ converge to zero uniformly. Without loss of generality, the following assumption is given.
     \begin{assumption} \cite{vamvoudakis2010online}\label{bound of NN issues}
        There exist known constants $b_{\epsilon}, b_{\epsilon x}, b_\Phi, b_{\Phi x}\in \mathbb{R}^{+}$ such that $\left\| \epsilon(x)  \right\| \leq b_{\epsilon}$, $\left\| \nabla\epsilon(x)  \right\| \leq b_{\epsilon x}$, $\left\| \Phi(x) \right\| \leq b_\Phi$ and $\left\| \nabla\Phi(x) \right\| \leq b_{\Phi x}$.
     \end{assumption}
    For fixed admissible control policies $u(x)$ and $v(x)$, inserting \eqref{optimal dV approximation} into \eqref{HJB equation} yields the Lyapunov equation (LE)
    \begin{equation}\label{approximation Lyapunov equation}
        {W^*}^{\top}\nabla \Phi(f(x)+g(x)u(x)+h(x)v(x))+r(x,u(x),v(x)) = \epsilon_{h},
    \end{equation}
    where the residual error follows $\epsilon_{h} = -(\nabla \epsilon(x) )^{\top}(f(x)+g(x)u(x)+h(x)v(x)) \in \mathbb{R}$. 
    According to Assumption \ref{bound of fg}, the system dynamics is Lipschitz. This leads to the boundness of the residual error, i.e., there exists a positive scalar $b_{\epsilon_{h}}$ such that $\left\| \epsilon_{h} \right\| \leq b_{\epsilon_{h}}$.
    
    By focusing on the NN parameterized LE \eqref{approximation Lyapunov equation}, 
    unlike the common analysis and derivation process in typical ADP related works \cite{liu2020editorial,lewis2009reinforcement},
    here we rewrite \eqref{approximation Lyapunov equation} into a linear in parameter (LIP) form that follows
    \begin{equation}\label{LIP Lyapunov equation}  
    \Theta = -{W^*}^{\top}Y+\epsilon_{h},
    \end{equation}
    where $\Theta = r(x,u(x),v(x)) \in \mathbb{R}$, and $Y = \nabla \Phi(f(x)+g(x)u(x)+h(x)v(x)) \in \mathbb{R}^{N}$. Note that both $\Theta$ and $Y$ can be obtained from real-time data.

    Given the LIP form and the measurable $Y$, $\Theta$ in \eqref{LIP Lyapunov equation}, from the perspective of adaptive control, we transform the learning of critic NN weight $W^{*}$ into a parameter estimation problem of an LIP system, where $Y$ and $W^{*}$ can be treated as the regressor matrix and the unknown parameter vector of a LIP system, respectively. 
    This novel transformation enables us to design a simple weight update law with weight convergence guarantee in Section \ref{OPRL}. 
    \subsection{Off-policy weight update law} \label{OPRL}
    The ideal critic weight $W^{*}$ in \eqref{LIP Lyapunov equation} is approximated by an estimated weight  $\hat{W}$ which satisfies the following relation
    \begin{equation}\label{approximation LIP Lyapunov equation}
    \hat{\Theta} = -\hat{W}^{\top}Y,
    \end{equation}
    where $\hat{\Theta}$ is the estimated utility function.
    Denoting the weight estimation error as $\Tilde{W} = \hat{W}- W^{*}$. Then, we can get
    \begin{equation}\label{approximation error}
    \Tilde{\Theta} = \Theta - \hat{\Theta} = \Tilde{W}^{\top}Y+\epsilon_{h}.
    \end{equation}
    To achieve $\hat{W} \to W^{*}$ and $\Tilde{\Theta} \to \epsilon_{h}$,  $\hat{W}$ should be updated to minimize $E = \frac{1}{2} \Tilde{\Theta}^{\top}\Tilde{\Theta}$.
    Furthermore, in order to guarantee the weight convergence while minimizing $E$, rather than incorporating external noises to satisfy the PE condition \cite{vamvoudakis2010online,bhasin2013novel}, here we exploit experience data to accelerate the online learning process, which could achieve the sufficient excitation required for the weight convergence. Finally, we design a simple yet efficient off-policy weight update law of the critic NN that follows
    \begin{equation} \label{w update law}
        \dot{\hat{W}} = - \Gamma k_c Y\Tilde{\Theta} -  \sum_{l=1}^{P} \Gamma k_{e} Y_l\Tilde{\Theta}_{l},
    \end{equation}
    where $\Gamma \in \mathbb{R}^{N \times N}$ is a constant positive definite gain matrix. $k_c, k_{e} \in \mathbb{R}^{+}$ are constant gains to balance the relative importance between current and experience data to the online learning process. $P \in \mathbb{N}^{+}$ is the volume of the experience buffers $\mathfrak{B}$ and $\mathfrak{E}$, i.e., the maximum number of recorded data points. The regressor matrix $Y_{l}\in \mathbb{R}^{N}$ and the approximation error $\Tilde{\Theta}_{l}\in \mathbb{R}$ denote the $l$-th collected data of the corresponding experience buffers $\mathfrak{B}$ and $\mathfrak{E}$, respectively. The developed critic NN weight update law \eqref{w update law} is in a different form comparing with the counterpart in typical ADP related works (See \cite{liu2020editorial,lewis2009reinforcement,vamvoudakis2010online} and the references therein). Our proposed weight update law \eqref{w update law} is easily implemented and enjoys guaranteed weight convergence without causing undesirable oscillations and additional control effort expenditures.
    
    To analyse the weight convergence of the critic NN, a rank condition about the experience buffer $\mathfrak{B}$, which serves as a richness criterion of the recorded experience data, is firstly clarified here in Assumption~\ref{rank condition}.
    \begin{assumption} \label{rank condition}
          Given an experience buffer $\mathfrak{B} = [Y^{\top}_{1},...,Y^{\top}_{P}] \in \mathbb{R}^{N \times P}$, 
          there holds $rank(\mathfrak{B}) = N$.
    \end{assumption}
    Comparing with the traditional PE condition given in \cite{tao2003adaptive}, the rank condition regarding $\mathfrak{B}$ in Assumption~\ref{rank condition} provides an index about the data richness that can be checked online, which is favourable to controller designers.

    Based on the aforementioned settings, the critic NN weight convergence proof is shown as follows.
    \begin{theorem} \label{theorem of weight convergency}
        Given Assumption~\ref{rank condition}, the weight learning error $\Tilde{W}$ converge to a small neighbourhood around zero.
    \end{theorem}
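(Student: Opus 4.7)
The plan is to interpret the weight-update law as a perturbed linear dynamics on the estimation error $\tilde{W}$ and then exploit the rank condition in Assumption~\ref{rank condition} to dominate the perturbations through a quadratic Lyapunov function. Since $W^{*}$ is constant, $\dot{\tilde{W}} = \dot{\hat{W}}$. Substituting \eqref{approximation error} (and the analogous expression $\tilde{\Theta}_{l} = \tilde{W}^{\top}Y_{l} + \epsilon_{h,l}$ for each stored sample) into the update law \eqref{w update law} rewrites the closed-loop error dynamics in the compact form
\begin{equation*}
\dot{\tilde{W}} = -\Gamma\Bigl(k_{c}YY^{\top} + k_{e}\,\mathfrak{B}\mathfrak{B}^{\top}\Bigr)\tilde{W} \;-\; \Gamma k_{c} Y \epsilon_{h} \;-\; \Gamma k_{e}\sum_{l=1}^{P} Y_{l}\epsilon_{h,l},
\end{equation*}
where $\mathfrak{B}\mathfrak{B}^{\top} = \sum_{l=1}^{P} Y_{l}Y_{l}^{\top}$. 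This separates the error evolution into a contracting part driven by the accumulated regressor information and a bounded perturbation driven by the NN residual errors.

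Next, I would pick the Lyapunov candidate $V_{W} = \tfrac{1}{2}\tilde{W}^{\top}\Gamma^{-1}\tilde{W}$, which is positive definite by positive definiteness of $\Gamma$. Differentiating along trajectories and discarding the nonnegative quadratic form $k_{c}\tilde{W}^{\top}YY^{\top}\tilde{W} \geq 0$ leaves
\begin{equation*}
\dot{V}_{W} \leq -k_{e}\tilde{W}^{\top}\mathfrak{B}\mathfrak{B}^{\top}\tilde{W} - k_{c}\tilde{W}^{\top}Y\epsilon_{h} - k_{e}\sum_{l=1}^{P}\tilde{W}^{\top}Y_{l}\epsilon_{h,l}.
\end{equation*}
Here Assumption~\ref{rank condition} is the crucial lever: because $\mathfrak{B}$ has full row rank $N$, the symmetric matrix $\mathfrak{B}\mathfrak{B}^{\top}$ is positive definite with minimum eigenvalue $\lambda_{m} := \lambda_{\min}(\mathfrak{B}\mathfrak{B}^{\top}) > 0$, yielding the dominant decay term $-k_{e}\lambda_{m}\|\tilde{W}\|^{2}$.

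To handle the perturbation terms I would bound $\|Y\|$ and $\|Y_{l}\|$ using Assumption~\ref{bound of NN issues} on $\nabla\Phi$ together with the Lipschitz bound on the system dynamics already invoked for $\epsilon_{h}$, and use $\|\epsilon_{h}\|, \|\epsilon_{h,l}\| \leq b_{\epsilon_{h}}$. Applying Young's inequality $ab \leq \tfrac{\eta}{2}a^{2} + \tfrac{1}{2\eta}b^{2}$ with an appropriate $\eta$ absorbs half of the quadratic decay against the cross terms, producing
\begin{equation*}
\dot{V}_{W} \leq -\tfrac{1}{2}k_{e}\lambda_{m}\|\tilde{W}\|^{2} + \kappa,
\end{equation*}
for a computable constant $\kappa$ proportional to $b_{\epsilon_{h}}^{2}$ and the regressor bounds. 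Hence $\dot{V}_{W} < 0$ whenever $\|\tilde{W}\| > \sqrt{2\kappa/(k_{e}\lambda_{m})}$, which by standard UUB arguments (e.g.\ Khalil) implies that $\tilde{W}$ is uniformly ultimately bounded to a residual set whose radius shrinks with larger $k_{e}\lambda_{m}$ and smaller NN approximation error.

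The main obstacle I anticipate is ensuring a \emph{uniform} positive lower bound on $\lambda_{m}$ over time. The rank condition as stated is an instantaneous property of the buffer; once the online PER or offline buffer-construction algorithm of Section~\ref{Section 3} overwrites entries, $\lambda_{m}$ can in principle fluctuate. The argument must therefore be made in tandem with a guarantee that the buffer management preserves (indeed, monotonically improves) $\lambda_{\min}(\mathfrak{B}\mathfrak{B}^{\top})$, so that the decay rate $k_{e}\lambda_{m}$ in the Lyapunov estimate admits a time-invariant positive lower bound. This is the natural point at which the proof must cite the buffer-update rules rather than treating $\mathfrak{B}$ as a frozen matrix.
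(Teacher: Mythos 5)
Your proposal is correct and follows essentially the same route as the paper's proof: the identical Lyapunov candidate $\tfrac{1}{2}\Tilde{W}^{\top}\Gamma^{-1}\Tilde{W}$, discarding the nonnegative current-data quadratic term, invoking Assumption~\ref{rank condition} to make $\sum_{l}k_{e}Y_{l}Y_{l}^{\top}$ positive definite, and bounding the residual-error perturbation to conclude uniform ultimate boundedness (your Young's-inequality completion versus the paper's linear bound $\Tilde{W}^{\top}\epsilon_{er}\leq\|\Tilde{W}\|b_{\epsilon_{er}}$ is a cosmetic difference yielding an equivalent residual set). Your closing caveat about needing a time-uniform lower bound on $\lambda_{\min}$ as the buffer is overwritten is a fair observation that the paper itself does not address, but it does not change the fact that the two arguments coincide.
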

    \begin{proof} \label{proof of weight convergency}
        Consider the following candidate Lyapunov function
        \begin{equation} \label{Vcl}
            V_{er} = \frac{1}{2} \Tilde{W}^{\top} \Gamma^{-1} \Tilde{W}. 
        \end{equation}
        The time derivative of $V_{er}$ reads
        \begin{equation} \label{dVcl}
            \begin{aligned}
                \dot{V}_{er} &= \Tilde{W}^{\top} \Gamma^{-1}(- \Gamma k_c Y\Tilde{\Theta} - \Gamma  \sum_{l=1}^{P}k_{e}Y_{l}\Tilde{\Theta}_{l})\\
                &= -k_c \Tilde{W}^{\top}Y\Tilde{\Theta} -\Tilde{W}^{\top} \sum_{l=1}^{P}k_{e} Y_{l}\Tilde{\Theta}_{l}\\
                &\leq - \Tilde{W}^{\top} B \Tilde{W} +\Tilde{W}^{\top}\epsilon_{er}.\\
            \end{aligned}
        \end{equation}
        where $B := \sum_{l=1}^{P} k_{e}Y_{l}Y^{\top}_l$, and $\epsilon_{er} :=-k_c Y\epsilon_{h}-\sum_{l=1}^{P}k_{e} Y_{l}\epsilon_{h_{l}}$. The boundness of $Y$ and $\epsilon_{h}$ results in bounded $\epsilon_{er}$, i.e., there exists $b_{\epsilon_{er}} \in \mathbb{R}^{+}$ such that $\left\|\epsilon_{er}\right\| \leq b_{\epsilon_{er}}$. Since $B$ is positive definite according to Assumption~\ref{rank condition}, \eqref{dVcl} can be written as
            \begin{equation} \label{dVcl 1}
            \begin{aligned}
                \dot{V}_{er} 
                & \leq -\left\| \Tilde{W} \right\| (\lambda_{\min}(B)\left\| \Tilde{W} \right\|- b_{\epsilon_{er}}).
            \end{aligned}
        \end{equation}
        Therefore, $\dot{V}_{er} < 0$ if $\left\| \Tilde{W} \right\| > \frac{b_{\epsilon_{er}}}{\lambda_{\min}(B)}$. Finally, it is concluded that the weight estimation error of the critic NN will converge to the residual set $\Omega_{\Tilde{W}}$ defined as
        \begin{equation} \label{compacrt set TildeW}
            \Omega_{\Tilde{W}} = \left\{\Tilde{W}| \left\| \Tilde{W} \right\| \leq \frac{b_{\epsilon_{er}}}{\lambda_{\min}(B)}\right\}.
        \end{equation}
        This completes the proof. 
    \end{proof}
    By observing \eqref{compacrt set TildeW}, the size of $\Omega_{\Tilde{W}}$ relates with the bound of $\epsilon_{er}$. As $N \to \infty$, we know that $\epsilon_{h} \to 0$ results in $\epsilon_{er} \to 0$. Then, we get $\dot{V}_{er} \leq -\lambda_{\min}(B) \left\| \Tilde{W} \right\|^2 $, i.e., $\Tilde{W} \to 0$ exponentially as $t \to \infty$.
    Equivalently, it is guaranteed that $\hat{W}$ converges to $W^{*}$. 
    Finally, in conjugation with~\eqref{optimal u} and~\eqref{optimal v}, the approximate optimal control policies are obtained as
    \begin{equation} \label{approximation u}
    \hat{u}(x) = - \beta \tanh(\frac{1}{2\beta}R^{-1}g^{\top}(x)\nabla \Phi^{\top}(x)\hat{W}), 
    \end{equation}
    \begin{equation} \label{approximation v}
    \hat{v}(x) = - \frac{1}{2\rho} h^{\top}(x)\nabla \Phi^{\top}(x)\hat{W}.
    \end{equation}

    In the following part, the main conclusions are provided based on the off-policy weight update law \eqref{w update law} and the approximate optimal control policies \eqref{approximation u}, \eqref{approximation v}.
    \begin{theorem} \label{final theorem for optimal problem}
    Consider the dynamics \eqref{auxiliary system}, the off-policy weight update law of the critic NN in \eqref{w update law}, and the control policies \eqref{approximation u} and \eqref{approximation v}. Given Assumptions \ref{bound of fg}-\ref{rank condition}, for sufficiently large $N$, the approximate control policies \eqref{approximation u} and \eqref{approximation v} stabilize the system \eqref{auxiliary system}. Moreover, the critic NN weight learning error $\Tilde{W}$ is uniformly ultimately bounded (UUB).
    \end{theorem}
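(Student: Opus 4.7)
The plan is to construct a composite Lyapunov function combining the optimal value function of the HJB equation with the weight-error Lyapunov function already exploited in Theorem \ref{theorem of weight convergency}. Concretely, I would take
\begin{equation}
L(x,\Tilde{W}) = V^{*}(x) + \tfrac{1}{2}\Tilde{W}^{\top}\Gamma^{-1}\Tilde{W},
\end{equation}
and differentiate it along the closed-loop trajectory of the auxiliary system \eqref{auxiliary system} driven by the approximate policies \eqref{approximation u} and \eqref{approximation v}. The time derivative decomposes as $\dot L = \nabla V^{*\top}(f+g\hat u+h\hat v)+\Tilde{W}^{\top}\Gamma^{-1}\dot{\hat W}$, where the second term has already been bounded in \eqref{dVcl}--\eqref{dVcl 1} of Theorem \ref{theorem of weight convergency} by $-\Tilde{W}^{\top}B\Tilde{W}+\Tilde{W}^{\top}\epsilon_{er}$.

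The key step for the first term is to add and subtract $\nabla V^{*\top}(f+g\mu^{*}+h\nu^{*})$ and apply the HJB equation \eqref{HJB equation} to replace this quantity by $-r(x,\mu^{*},\nu^{*})$. This leaves two residual cross terms, $\nabla V^{*\top}g(x)(\hat u-\mu^{*})$ and $\nabla V^{*\top}h(x)(\hat v-\nu^{*})$, which I would express in terms of $\Tilde{W}$ and $\nabla\epsilon$ using the closed forms \eqref{optimal u}--\eqref{optimal v} and \eqref{approximation u}--\eqref{approximation v}. The $\hat v-\nu^{*}$ term is linear in $\nabla\Phi^{\top}\Tilde{W}+\nabla\epsilon$ and is therefore straightforward to bound. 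For the $\hat u-\mu^{*}$ term I would invoke the Lipschitz continuity of $\tanh(\cdot)$ (with constant $1$) to produce a bound that is also linear in $\nabla\Phi^{\top}\Tilde{W}+\nabla\epsilon$, after which Assumption \ref{bound of fg} on $\|g(x)\|$ and Assumption \ref{bound of NN issues} on $\|\nabla\Phi\|$, $\|\nabla\epsilon\|$ translate the cross terms into quadratic-in-$\Tilde{W}$ and quadratic-in-$x$ pieces via Young's inequality.

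The main obstacle will be the nonlinearity of the $\tanh$-based controller together with the fact that the negative-definite contribution from $-r(x,\mu^{*},\nu^{*})$ must dominate the perturbations introduced by the NN reconstruction error. I would handle this by exploiting that $\mathcal{L}(x)$ in \eqref{state penalty function} contains the strictly positive definite term $x^{\top}Qx$, so that $-r(x,\mu^{*},\nu^{*})\le -\lambda_{\min}(Q)\|x\|^{2}+\text{(nonpositive barrier and disturbance-bound terms)}$, which gives the required $\|x\|^{2}$ decrement. Coupled with the $-\lambda_{\min}(B)\|\Tilde{W}\|^{2}$ contribution from the off-policy update, Young's inequality allows me to absorb the cross terms into these quadratic sinks, leaving
\begin{equation}
\dot L \le -\kappa_{x}\|x\|^{2}-\kappa_{W}\|\Tilde{W}\|^{2}+\eta,
\end{equation}
for positive constants $\kappa_{x},\kappa_{W}$ and a residual $\eta$ that collects all bounded-approximation-error contributions and that vanishes as $N\to\infty$. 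Standard Lyapunov arguments then yield that both $x$ and $\Tilde{W}$ are UUB, with ultimate bounds proportional to $\sqrt{\eta/\min(\kappa_{x},\kappa_{W})}$; invoking the equivalence result of Theorem \ref{theorem for equivalence} transfers stability of the auxiliary system back to robust stabilization of the original system \eqref{original sys} without violating the input and state constraints, provided the initial condition is chosen inside the feasible set so that $\mathcal{L}(x)$ remains finite along the trajectory.
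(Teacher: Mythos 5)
Your proposal follows essentially the same route as the paper's Appendix B: the same composite Lyapunov function $J = V^{*}(x) + \tfrac{1}{2}\Tilde{W}^{\top}\Gamma^{-1}\Tilde{W}$, the same splitting of $\dot J$ into the value-function part and the weight-error part, the same add-and-subtract of $\nabla V^{*\top}(f+g\mu^{*}+h\nu^{*})$ followed by the HJB substitution, reuse of the bound from \eqref{dVcl}, and a final completion-of-squares argument yielding a residual set for $\Tilde{W}$. The one localized difference is your treatment of $\hat{u}-\mu^{*}$: you invoke the global $1$-Lipschitz property of $\tanh$, whereas the paper performs a first-order Taylor expansion of $\tanh(\mathscr{G}^{*})$ about $\hat{\mathscr{G}}$ and bounds the remainder via a cited lemma, which is what produces the $N$-independent constants $12m\beta^{2}$ and $3b^{2}_{\epsilon^{*}_{u}}$ in its $\mathcal{D}$. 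Your Lipschitz shortcut is arguably cleaner and gives a linear-in-$\|\Tilde{W}\|$ bound directly, but note one overstatement: the residual $\eta$ does \emph{not} vanish as $N\to\infty$ even in your version, because the term $\epsilon_{s}$ inherited from combining $-\mathcal{W}(\mu^{*})$ with the Young-inequality cross term (cf. \eqref{equalivance 6 of W}--\eqref{equalivance 7}) is independent of the approximation error; only the $\nabla\epsilon$-induced contributions shrink with $N$. This does not affect the UUB conclusion, which is all the theorem claims. Your closing appeal to Theorem \ref{theorem for equivalence} to transfer stability back to \eqref{original sys} is extra relative to what Theorem \ref{final theorem for optimal problem} asserts, but it is consistent with the paper's overall logic.
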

    \begin{proof}
    See Appendix \ref{Theorem 3}.
    \end{proof}
     Assumption \ref{rank condition} in Theorem \ref{final theorem for optimal problem} is the prerequisite to ensure that $\hat{W}$ converges to $W^{*}$. The guaranteed weight convergence enables us to directly apply $\hat{W}$ in \eqref{w update law} to construct the approximate optimal control policies \eqref{approximation u}, \eqref{approximation v}. Assumption \ref{rank condition} is not restrictive and can be easily satisfied by the algorithms proposed in the next section.
     \subsection{Experience replay with online and offline experience buffers} \label{section PER}
     To get rich enough experience data to satisfy Assumption \ref{rank condition},
     given the sampling deficiency problem of the subsequent way of data usage in existing ADP related works \cite{yang2019online,kamalapurkar2016model}, 
     and inspired by the concurrent learning (CL) technique developed for system identification \cite{chowdhary2010concurrent},
     here we design both online and offline principled methods to provide the required rich enough experience data to satisfy Assumption \ref{rank condition}. To the best of our knowledge, this is the first attempt to design principled ways of replaying rich enough experience data to achieve the required exploration for the critic NN weight convergence.
     \subsubsection{Online PER algorithm}
     Before the estimated weight converges (line 4-5), Algorithm \ref{DataSelection} chooses the minimum eigenvalue (i.e., $\lambda_{\min}(\mathfrak{B})$) as the priority scheme to filter experience data $Y$, $\Theta$ recorded into the experience buffers $\mathfrak{B}$ and $\mathfrak{E}$, respectively. 
     Here the prioritized criterion is different from ones used in existing PER algorithms \cite{schaul2015prioritized}.
     We prefer experience data accompanied with a larger $\lambda_{\min}(\mathfrak{B})$ given the facts that: a) a nonzero $\lambda_{\min}(\mathfrak{B})$ ensures that $rank(\mathfrak{B}) = N$ in Assumption \ref{rank condition} holds \cite{chowdhary2010concurrent}, i.e., the convergence of $\hat{W}$ to $W^*$ is guaranteed; b) according to \eqref{dVcl 1} and \eqref{compacrt set TildeW}, a larger $\lambda_{\min}(\mathfrak{B})$ leads to a faster weight convergence rate and a smaller residual set.
     Although efficient, the priority scheme $\lambda_{\min}(\mathfrak{B})$ accompanies with additional computation loads.
     Thus, on convergence (line 6-7), i.e., we have gotten sufficient excitation required for the weight convergence, we alternate to a low-cost way where recent experience data are sequentially recorded.
     This sequent way of data usage enjoys robustness to a dynamic environment since collected real-time data could reflect environmental changes in time.
     Unlike standard methods that first construct a huge experience buffer and then sample partial data \cite{fedus2020revisiting}, to reduce computation loads and relieve hardware requirements, we directly build experience buffers with a limited buffer size $P$ here, and all of the recorded experience data are replayed to the critic NN for the online weight learning. 
     The buffer size $P$ is a hyper-parameter that requires careful tuning. In order to satisfy Assumption \ref{rank condition}, $P$ is selected such that $P \geq N$ holds.  
     \subsubsection{Offline experience buffer construction algorithm}
     In addition to experience buffers $\mathfrak{B}$ and $\mathfrak{E}$ constructed from online data featured with sufficient richness and high minimum eigenvalues, in Algorithm \ref{offline DataSelection}, we construct experience buffers $\mathcal{F},\mathcal{G},\mathcal{H},\mathcal{K},\mathcal{R} \in \mathbb{R}^{N \times P}$ with offline collected experience data. Here the offline experience data are generated from the pre-simulation within the given operation region $A$.
     To generate rich enough training data, for $i$-th dimension of an allowable operation region $A_i \in \mathbb{R}$, we can sample training data equidistantly with a fixed mesh size $\delta_i \in \mathbb{R}^{+}$, or a predefined number $c_i \in \mathbb{N}^{+}$. 
     Besides, rather than sampling partial data from the offline constructed experience buffers based on an uniform or a prioritized way \cite{fedus2020revisiting}, we replay all offline experience data in an average way for the weight estimation. Thereby, the off-policy weight update law \eqref{w update law} based on Algorithm \ref{offline DataSelection} is redesigned as 
    \begin{equation} \label{w update law offline}
        \dot{\hat{W}} = - \Gamma k_c Y\Tilde{\Theta} - \frac{1}{P}\sum_{l=1}^{P} \Gamma k_{e} Y_l\Tilde{\Theta}_{l}.
    \end{equation}
    The implementation of using offline recorded experience data to accelerate the online learning process enjoys two advantages: the rank condition in Assumption \ref{rank condition} is easily satisfied, and the possible influence of data noises is offset by averaging. 
    \begin{remark}
        The subsequent \cite{yang2019online,kamalapurkar2016model} or uniform sampling way \cite{fedus2020revisiting} of data usage cannot ensure that Assumption \ref{rank condition} is always satisfied, whereas, the developed online PER algorithm not only guarantees establishment of Assumption \ref{rank condition}, but also accelerates the weight convergence speed.
        Besides, in the CL algorithm \cite{chowdhary2010concurrent}, a constant hyperparameter that highly depend on prior knowledge is required to characterize the data richness, which results in additional parameter tuning effort. This is avoided in Algorithm \ref{DataSelection} by calculating the rank value of the experience buffer $\mathfrak{B}$, even though additional (but acceptable) computation load is unintentionally introduced.
    \end{remark}
    \begin{remark}
     The simple three-layer NNs adopted in this paper provides us with opportunities to revisit the ER technique and investigate principled ways to exploit experience data to accelerate the online learning process, which is difficult in the deep RL field because the complexity of deep NNs hinders researchers from understanding the mechanism of the ER technique \cite{fedus2020revisiting}.
     Algorithms \ref{DataSelection}-\ref{offline DataSelection} are plug-in methods, which can be widely used with little extra computations and engineering efforts. We argue that they also provide alternative solutions for the parameter convergence problem existing in the adaptive control field.
    \end{remark}
        \begin{algorithm}[!t]
        \caption{Online Prioritized Experience Replay Algorithm} 
        \label{DataSelection}
        \begin{algorithmic}[1] 
            \Require Iteration index: $n_r$;  Buffer size: $P$; Threshold: $\xi$.
            \Ensure Experience buffers: $\mathfrak{B}$, $\mathfrak{E}$.
            \If{$n_r \leq P$}
                \State Record current $Y$, $\Theta$ into $\mathfrak{B}$, $\mathfrak{E}$ respectively.
            \Else
                \If {$\left\|W_{n_{r}}-W_{n_{r}-1}\right\| > \xi$}
                    \State Record prioritized $Y$, $\Theta$ leading to high  $\lambda_{\min}(\mathfrak{B})$.
                \Else
                    \State Record current $Y$, $\Theta$ sequentially to update $\mathfrak{B}$,$\mathfrak{E}$.
                \EndIf
            \EndIf
            \end{algorithmic}
            \end{algorithm}
    \begin{algorithm}[!t]
        \caption{Offline Experience Buffer Construction Algorithm} 
        \label{offline DataSelection}
        \begin{algorithmic}[1] 
            \Require $A = [\underline{A},\overline{A}]$ with $\underline{A}, \overline{A}\in \mathbb{R}^{n}$; Mesh size : $\delta \in \mathbb{R}^{n}$, or data point number: $c \in \mathbb{R}^{n}$; Empty sets: 
            $\mathcal{X} \in \mathbb{R}^{n \times d}$.
            \Ensure $\mathcal{F}$; $\mathcal{G}$; $\mathcal{H}$; $\mathcal{K}$; $\mathcal{R} $; $P$.
            \State Sampling: $\mathcal{X} \xleftarrow[\delta]{c} A$; $P \leftarrow \prod_{i = 1}^n (\overline{A}_i-\underline{A}_i)/\delta_i$, or $\prod_{i = 1}^n c_i$.
            \State Data collection: $\mathcal{F} \leftarrow \nabla \Phi^{\top}(\mathcal{X})f(\mathcal{X})$; $\mathcal{R} \leftarrow r_d(\mathcal{X})$\\
             $\mathcal{G} \leftarrow \nabla \Phi^{\top}(\mathcal{X})g(\mathcal{X})$;
             $\mathcal{H}\leftarrow \nabla \Phi^{\top}(\mathcal{X})h(\mathcal{X})$; $\mathcal{K} \leftarrow \mathcal{L} (\mathcal{X})$
            \end{algorithmic}
        \end{algorithm}    
\section{Numerical simulations} \label{Section 4}
    This section provides three simulation examples to show the effectiveness of the proposed off-policy weight update laws \eqref{w update law}, \eqref{w update law offline}, the approximate optimal control policy \eqref{approximation u}, and Algorithms \ref{DataSelection}-\ref{offline DataSelection}.
    Firstly, we consider an optimal regulation problem (ORP) of a nonlinear system \cite{vamvoudakis2010online} in Section \ref{second-order nonlinear system simulation case}. This ORP serves as a benchmark problem to prove that both Algorithm \ref{DataSelection} based \eqref{w update law} and  Algorithm \ref{offline DataSelection} based \eqref{w update law offline} enable the estimated critic NN weight to converge to the actual optimal value, which is marginally considered in existing single critic structure related works \cite{heydari2012finite,yang2018event}.
    Then, to show the superiority of our proposed off-policy risk-sensitive RL-based control framework to counter input and state constraints under additive disturbances, a 
    pendulum system \cite{liu2013policy} is investigated in Section \ref{simulation pendulum}. 
    Furthermore, a target (set-point) tracking control problem of a 2-DoF robot manipulator \cite{li2021concurrent} is investigated in Section \ref{simulation 2-DoF} to validate the real-time performance of our proposed control framework.
    \subsection{Example 1: ORP of a benchmark nonlinear system}\label{second-order nonlinear system simulation case}
    To validate that based on our proposed off-policy weight update laws \eqref{w update law}, \eqref{w update law offline}, and Algorithms \ref{DataSelection}-\ref{offline DataSelection}, the estimated weight guarantees convergence to the actual optimal value, a benchmark problem \cite{vamvoudakis2010online} is investigated here. Note that only an optimal regulation problem without considering disturbances nor input/state constraints is investigated here. Otherwise, the actual optimal value of the NN weight is unknown. The benchmark continuous-time nonlinear system is given as
        \begin{equation*}
        \dot{x} = f(x) + g(x)u, \ \  x \in \mathbb{R}^{2},
    \end{equation*}
    where $\small{f(x)= \begin{bmatrix}
            -x_{1}+x_{2},
            -0.5x_{1}-0.5x_{2}(1-(\cos2x_{1}+2)^2)
        \end{bmatrix} ^{\top}}$,
        $ g(x)= \begin{bmatrix}
            0 ,
            \cos(2x_{1})+2 
        \end{bmatrix}^{\top}$.
    The standard quadratic cost function follows
    \begin{equation*}\label{second-order nonlinear system cost function}
        V(x) = \int_{0}^{\infty} x^{\top}Qx  + u^{\top}Ru \,dt,
    \end{equation*}
    where $Q = I_{2 \times 2}$ and $R = 1$. By following the converse HJB approach developed in \cite{nevistic1996constrained}, we get the optimal value function as $V^{*} = 0.5 x^2_1 + x^2_2$. Thus, by choosing the activation function as $\Phi(x) = [x^{2}_{1},x_{1}x_{2},x^{2}_{2}]^{\top}$, the optimal weight follows $W^* = [0.5,0,1]^{\top}$. Initial values are set as $x(0) = [1,1]^{\top}$, $\hat{u}(0) =0$.
    For the off-policy weight update law \eqref{w update law} based on Algorithm \ref{DataSelection}, we choose $P = 5$, $k_c = 1$, $k_e = 1$, $\Gamma = \diag{2,1.4,1}$,  and $\xi = 10^{-3}$. It is displayed in Fig.\ref{online weight case} that after $1$ s, the estimated weight converges to
    \begin{equation*} \label{nonlinear system esitmated weight online case}
        \hat{W}_1 = [0.5040,0.0592,1.0625]^{\top}.
    \end{equation*}
    Regarding the weight update law \eqref{w update law offline} under Algorithm \ref{offline DataSelection}, we start with constructing offline experience buffers by sampling $10$ data points separately for $x_1 \in A_1 =[-2,2]$, $x_2 \in A_2 = [-4,4]$.
    $P = 100$, $k_c = 1$, $k_e = 1$, and $\Gamma =  \diag{5,0.5,0.01}$ are chosen during the online operation. As displayed in Fig.\ref{offline weight case}, the estimated weight converges to
        \begin{equation*} \label{nonlinear system esitmated weight offline case}
        \hat{W}_2 = [0.50721,-0.0417,0.9783]^{\top}.
    \end{equation*}
    Thus, it is concluded that the weight update laws \eqref{w update law}, \eqref{w update law offline} under 
    Algorithms \ref{DataSelection}-\ref{offline DataSelection} ensure that $\hat{W}$ converges to $W^*$.
     Comparing with the results shown in \cite{vamvoudakis2010online}, without incorporating external noises to satisfy the PE condition, our proposed off-policy weight update laws enable $\hat{W}$ converge to $W^*$ with a fast speed.
      \begin{figure}[!t]
    \subfloat[ $\hat{W}_1$ under online Algorithm \ref{DataSelection}.]{\includegraphics[width=1.85in]{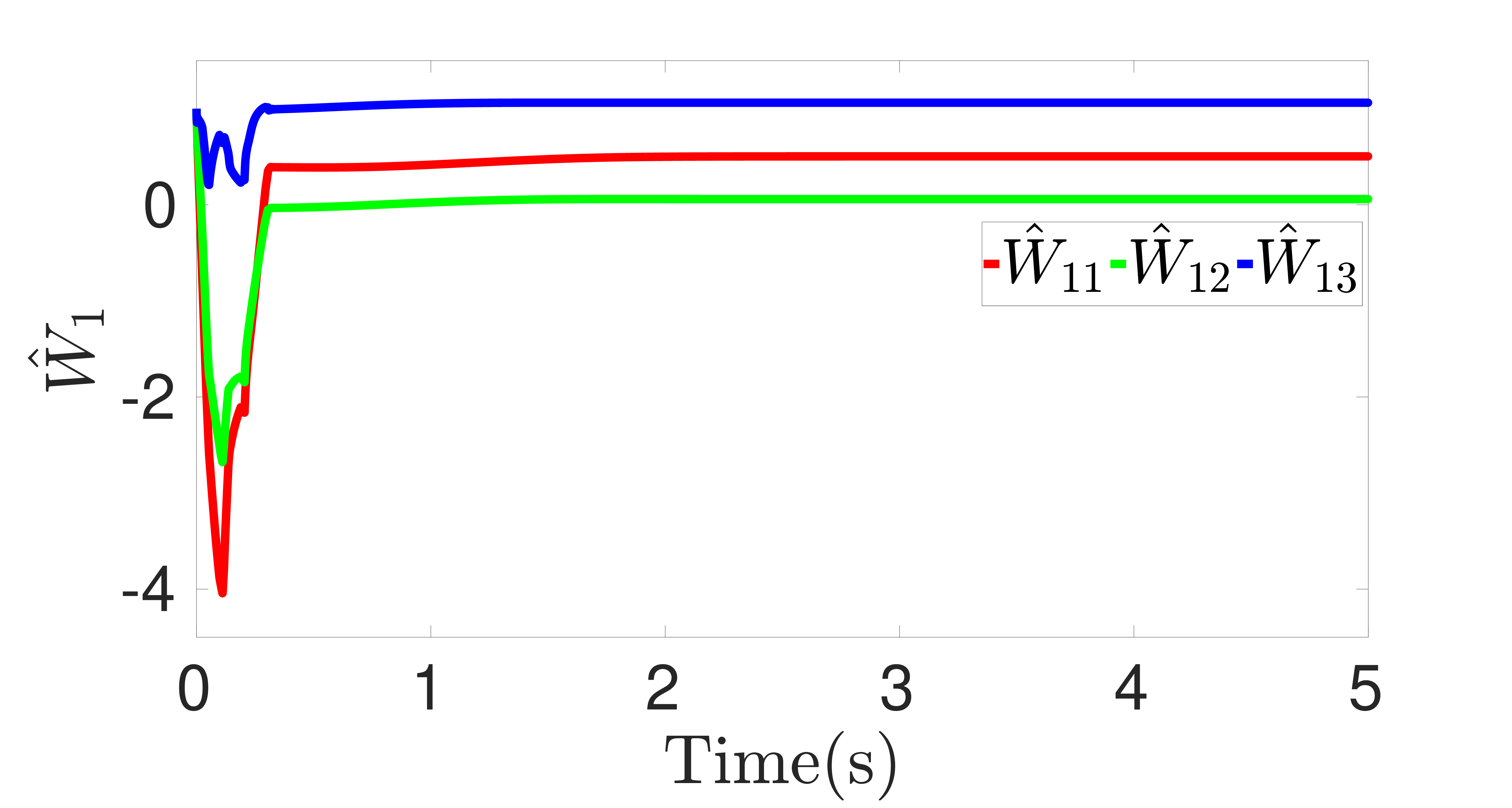}%
    \label{online weight case}}
    \subfloat[$\hat{W}_2$ under offline Algorithm \ref{offline DataSelection}.]{\includegraphics[width=1.85in]{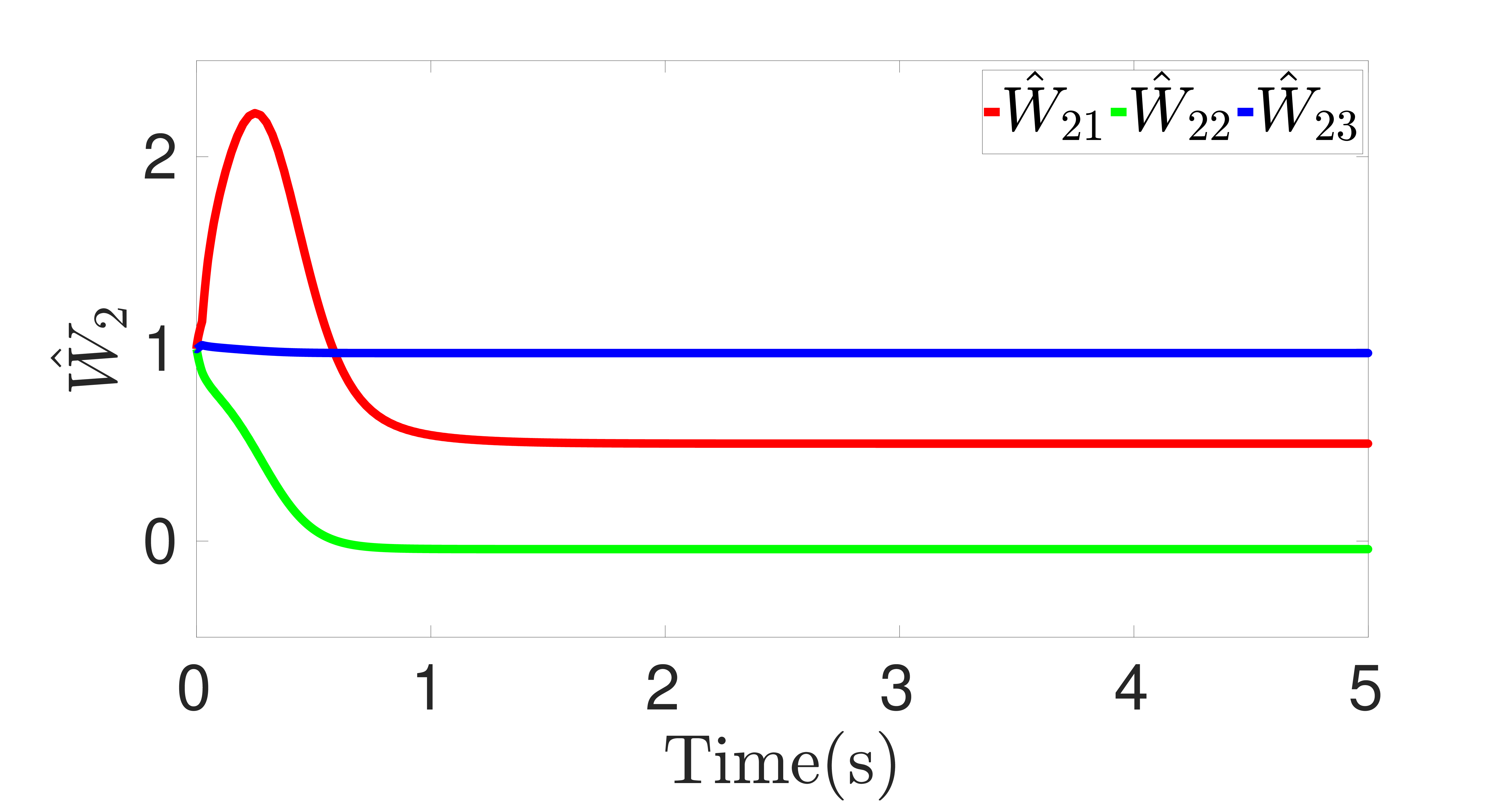}%
    \label{offline weight case}}
    \caption{The evolution trajectory of the estimated weight.}
    \label{weight convergence of benchmark problem}
    \end{figure}

\subsection{Example 2: CRSP of a pendulum system} \label{simulation pendulum}
    This section investigates the CRSP (Problem \ref{Robust constraint control problem}) of a pendulum system to validate the effectiveness and efficiency of the proposed control framework shown in Fig.\ref{algorithm framework}.
    The state-space model of a pendulum system is given as \cite{liu2013policy}
    \begin{equation*}\label{pendulum system original system}
        \dot{x} = f(x) +g(x)u +k(x)d(x),
    \end{equation*}
    where $f(x)= \begin{bmatrix}
            x_{2},
            -4.9\sin{x_1}-0.2x_2 
        \end{bmatrix}^{\top}$,
        $g(x)=\begin{bmatrix}
            0,
            0.25 
        \end{bmatrix}^{\top}$,
        $k(x)=\begin{bmatrix}
            1,
            -0.2
        \end{bmatrix}^{\top}$, and
        $d(x)=\omega_{1} x_1 \sin(\omega_{2}x_2)$. During the simulation, $\omega_{1}$ and $\omega_{2}$ are randomly chosen within the scope $[-\sqrt{2}/2,\sqrt{2}/2]$ and $[-2,2]$,
        respectively. Thus, $\left\| d(x)\right\| \leq \sqrt{2}/2 \left\| x\right\|$ and $\left\| g^{\dagger}(x)k(x)d(x)\right\| \leq 0.4\sqrt{2} \left\| x\right\|$ establishes, and Assumptions \ref{bound of d and inverse term}-\ref{bound of decomposed disturbance part} can be satisfied by choosing $d_M(x) =\sqrt{2}/2 \left\| x\right\|$ and $l_M(x)= 0.4\sqrt{2} \left\| x\right\|$. Here the input and state constraints are considered as $\mathbb{U} = \left\{u \in \mathbb{R} : |u| \leq \beta \right\}$ and $\mathbb{X}_1 = \left\{x_1 \in \mathbb{R} : h_1(x_1) = |x_{1}|-\alpha_{1}<0 \right\}$, $\mathbb{X}_2 = \left\{x_2 \in \mathbb{R} : h_2(x_2)=|x_{2}|-\alpha_{2} <0\right\}$, where $\beta = 1.5$, $\alpha_1 = 2.01$, and $\alpha_2 = 4$.
        The corresponding auxiliary system follows
        \begin{equation}\label{pendulum system auxiliary system}
        \dot{x} = f(x)+g(x)u+h(x)v,
        \end{equation}
        where $h(x) = [1,0]^{\top}$.
    Let $\rho=0.1$, $k_i = 1/(1+(\alpha_i - \left| x_i\right|)^2), i =1,2$, $Q = I_{2 \times 2}$, and $R = 1$. For the CRSP of pendulum, the cost function of \eqref{pendulum system auxiliary system} follows
    \begin{equation*}\label{utility function with constraints pendulum}
        V_{c}(x) = \int_{0}^{\infty} \mathcal{W}(u)+\mathcal{L}(x)+0.1v^{\top}v+0.37\left\| x \right\|^2\,dt,
    \end{equation*}
    where $\mathcal{W}(u) = 2\beta R u \tanh^{-1}(u/\beta)+\beta^2 R\log(1-u^2/\beta^2)$, $\mathcal{L}(x) = x^{\top}Qx+ k_1\log (\alpha^2_{1}/(\alpha^2_{1}-x^2_{1}))+k_2 \log (\alpha^2_{2}/(\alpha^2_{2}-x^2_{2}))$.
    For comparison, a robust optimization problem (ROP) for the pendulum without considering input and state constraints is also investigated here.
    Regarding the ROP case, the common quadratic cost function given in \cite{lin1992robust} is recalled  here. By setting $Q_o = I_{2\times 2}$ for the $x^{\top}Q_ox$ term and $R_o = 1$ for the $u^{\top}R_o u$ term, the cost function of \eqref{pendulum system auxiliary system} follows
    \begin{equation*}\label{utility function without constraints pendulum}
        V_{o}(x) = \int_{0}^{\infty} 1.37\left\| x \right\|^2 + u^{\top}u+0.1v^{\top}v \,dt.
    \end{equation*}
    The activation function for the pendulum is chosen as $\Phi(x) = [x^{2}_{1},x_{1}x_{2},x^{2}_{2},x^{3}_{2},x_{1}x^{2}_{2},x^{2}_{1}x_{2}]^{\top}$. 
    For the off-policy weight update law \eqref{w update law} based on Algorithm \ref{DataSelection}, $P= 9$, $k_c = 0.01$, $k_e = 0.001$, $\Gamma = I_{6 \times 6}$,  and $\xi = 10^{-3}$  are chosen. 
    The initial valuesare set as $x(0) = [2,-2]$, $\hat{u}(0) = 0$, $\hat{v}(0) = 0$, and $\hat{W} = [1,1,1,1,1,1]^{\top}$. Note that the initial values of states are purposely set to be near the constraint boundaries to highlight the efficacy of our method. Simulation results for the pendulum system are shown as follows.
    \begin{figure*}[!t]
    \centering
    \subfloat[Convergence result of $\hat{W}$ for CRSP case.]{\includegraphics[width=2.4in]{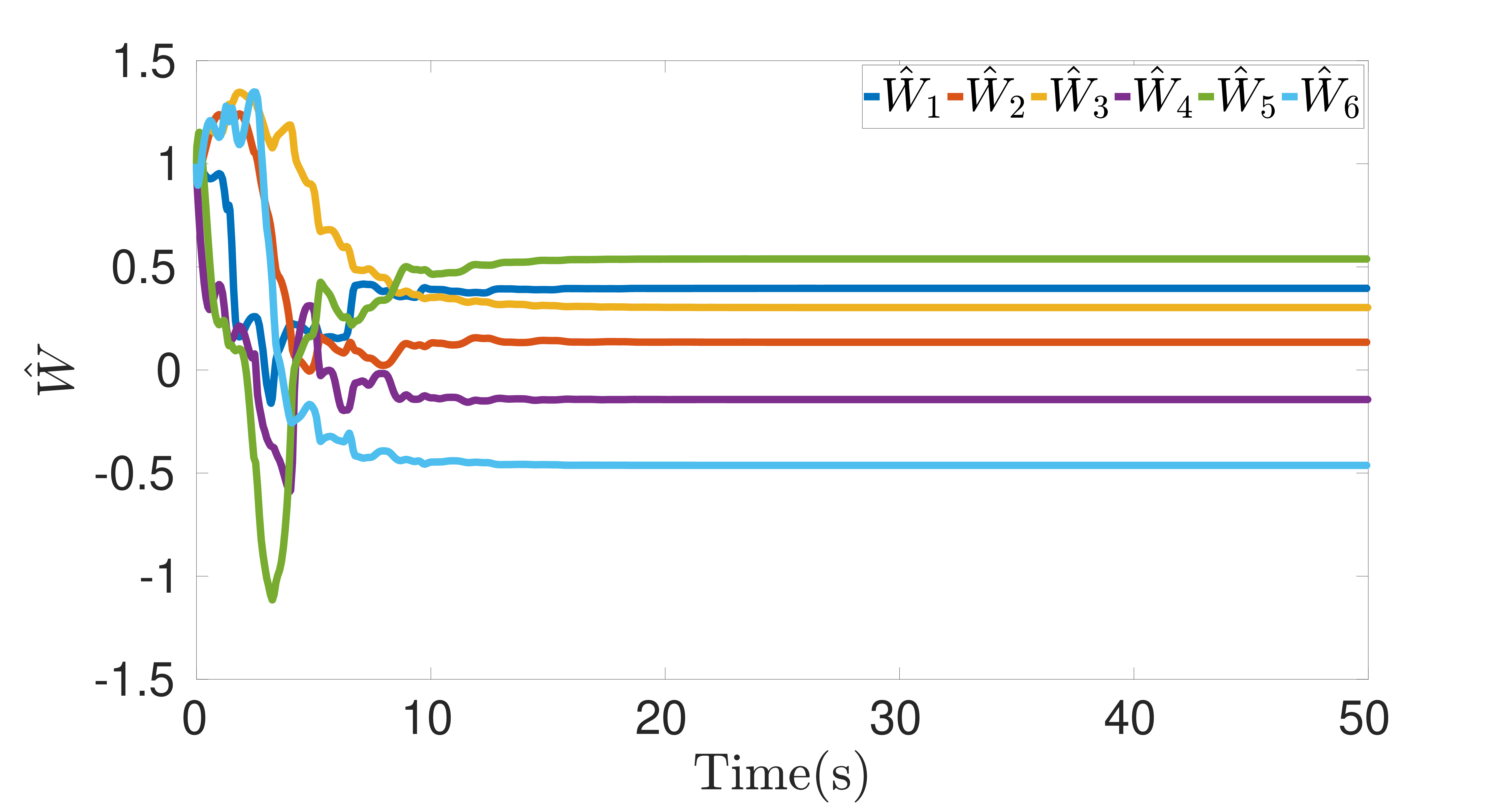}%
    \label{fig of critic weight pendulum}}
    \subfloat[Trajectories of control input $\hat{u}$ .]{\includegraphics[width=2.4in]{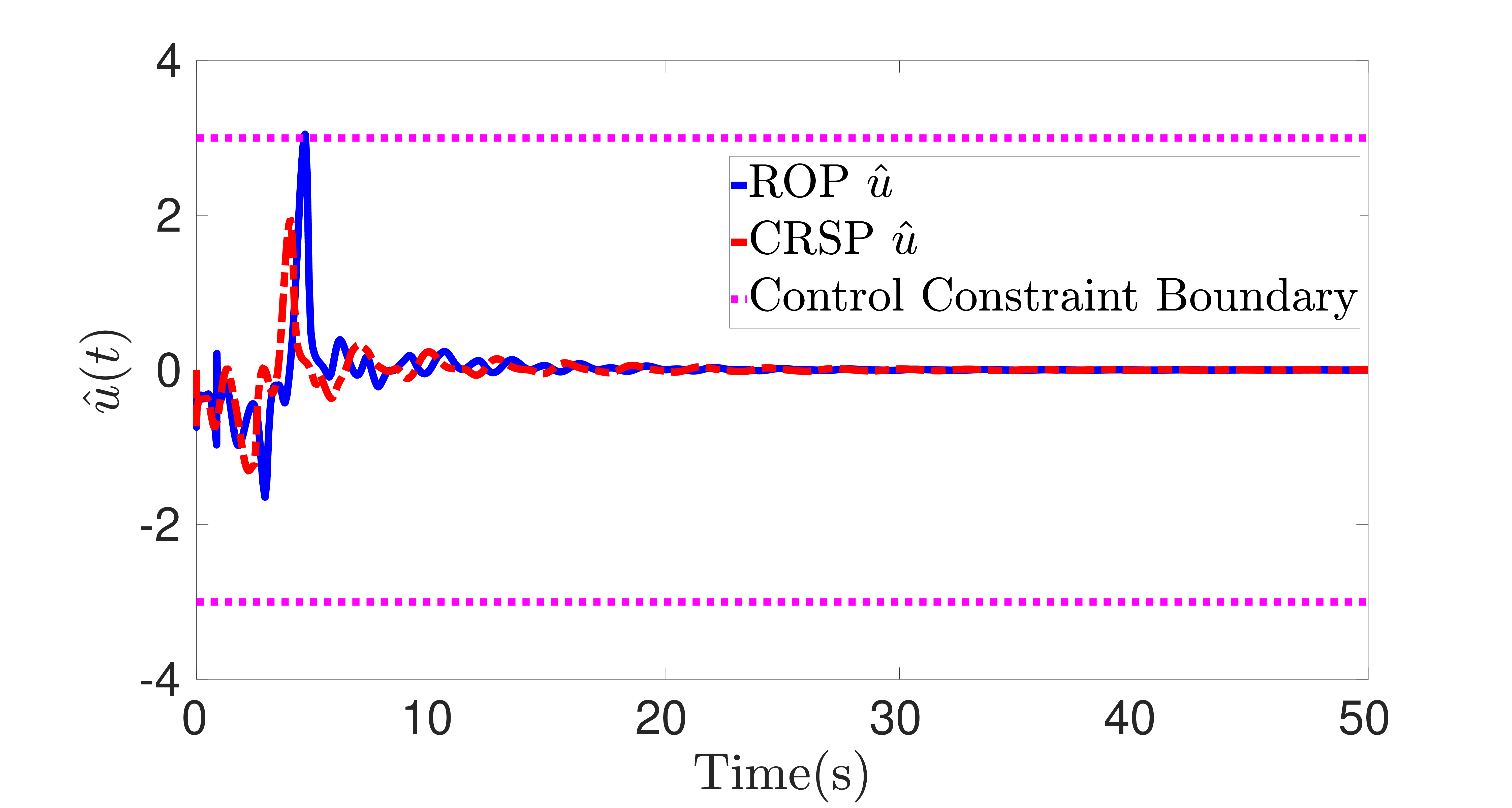}%
    \label{fig of control strategy comparision pendulum}}
    \subfloat[Phase plot of states $x$.]{\includegraphics[width=2.4in]{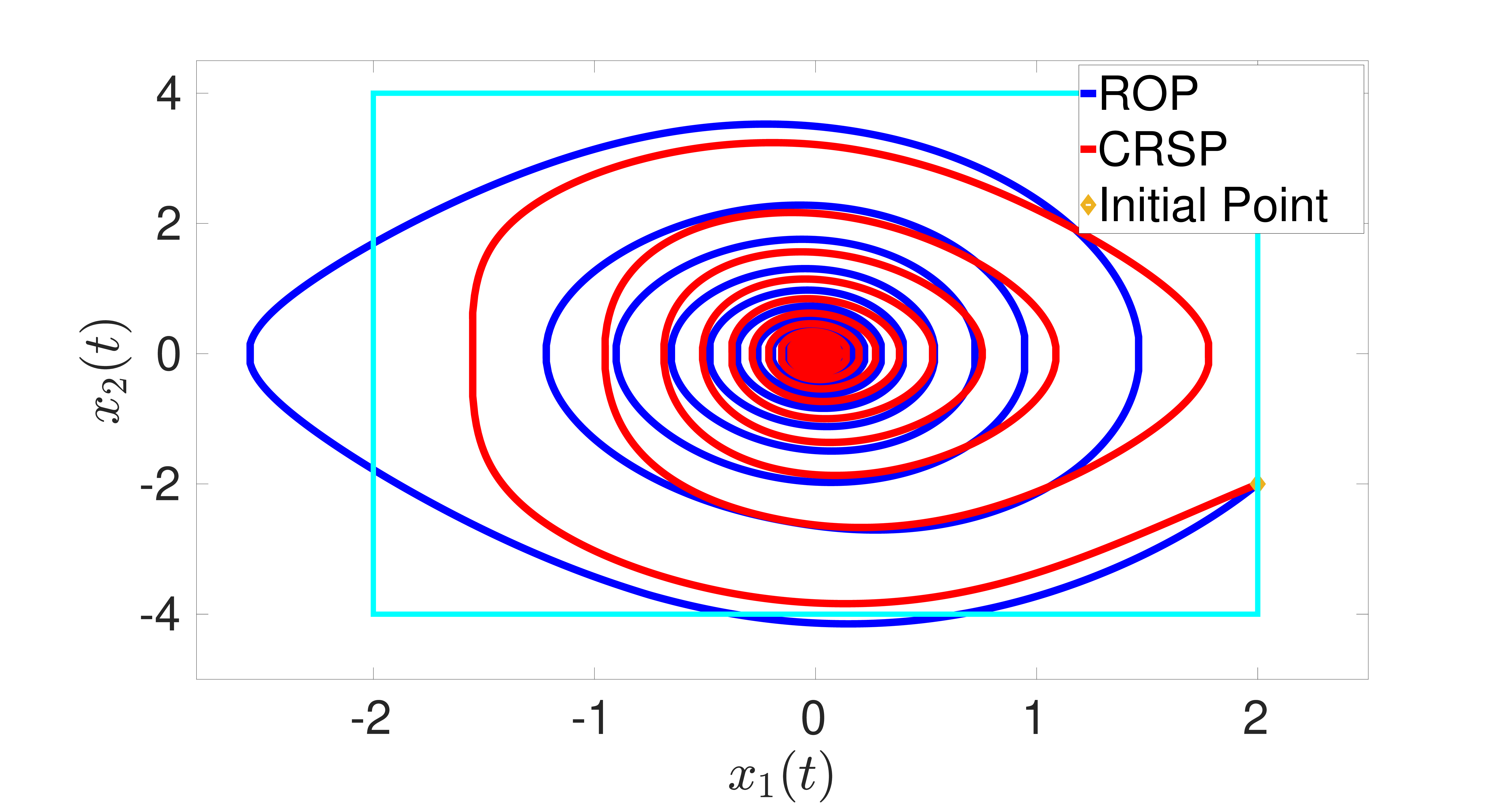}%
    \label{phase plot pendulum}}
    \caption{The comparison results for CRSP and ROP of the pendulum system.}
    \label{simulation result of pendulum}
    \end{figure*}
    
    The evolution trajectory of the estimated critic NN weight of the CRSP case is shown in Fig.\ref{fig of critic weight pendulum}. It is observed that after $ 20 \mathrm{s}$, the convergence result of $\hat{W}$ achieves.
    As displayed in Fig.\ref{fig of control strategy comparision pendulum}, the control trajectory of the CRSP case illustrates that the RS-IP term enables the satisfaction of input constraints. However, for the ROP case, the input constraint boundary is violated.
    The phase portrait of states is provided in Fig.\ref{phase plot pendulum} where the cyan rectangle represents the state constraint boundary. It is observed that states $x_1$, $x_2$ asymptotically converge to the equilibrium point and always lie in the state constraint set for the CRSP case. However, for the ROP case, the predefined state constraint is violated.
    The simulation results displayed in Fig.\ref{simulation result of pendulum} have verified that the resulting approximate optimal control policy by solving Problem \ref{Optimal control problem} can efficiently deal with Problem \ref{Robust constraint control problem}.
 \subsection{Example 3: CRSP of a 2-DoF Robot Manipulator} \label{simulation 2-DoF}
 To further demonstrate the real-time performance of our proposed control framework,
 we consider a target (set-point) tracking control problem of a 2-DoF robot manipulator in this section. The objective is to drive the robot manipulator to reach the target point under additive disturbances, input and state constraints. For brevity, denoting $q = [q_1, q_2]^{\top}$, $\dot{q} = [\dot{q}_1, \dot{q}_2]^{\top}$, $c_{2} = \cos q_{2}$,  $s_{2} = \sin q_{2}$. Then, the Euler-Lagrange (E-L) equation of a 2-DoF robot manipulator follows \cite{li2021concurrent}
\begin{equation*} \label{2DoF model}
M(q) \Ddot{q}+C(q,\dot{q})\dot{q}=\tau,
\end{equation*}
where the inertia matrix is $\small{M(q) = \begin{bmatrix}
    p_{1}+2p_{3}c_2 & p_{2}+p_{3}c_2 \\
    p_{2}+p_{3}c_2 & p_{2} \\
\end{bmatrix}}$, and the matrix of centrifugal and Coriolis terms follows $\small{C(q,\dot{q}) = \begin{bmatrix}
    -p_{3}\dot{q}_{2}s_2 & -p_{3}(\dot{q}_{1}+\dot{q}_{2})s_2\\
    p_{3}\dot{q}_{1} s_2 & 0\\
\end{bmatrix}}$.
Let $[x_1,x_2,x_3,x_4]=[q_1,q_2,\dot{q}_1,\dot{q}_2]^{\top}$,
the above E-L equation can be written in the state-space form as \eqref{original sys}, where 
$f(x) = [x_3,x_4,(M^{-1}(-C)[x_3, x_4]^{\top})^{\top}]^{\top}$, and $g(x) = [[0,0]^{\top},[0,0]^{\top},(M^{-1})^{\top}]^{\top}$. Besides, with $k(x)=[[1,0]^{\top},[0,1]^{\top},0_{2 \times 2}]^{\top}$, we assume that the robot manipulator suffers an additive disturbance that follows $d(x)= [\delta_1 x_{1} \sin(x_{2}),\delta_2 x_{2} \cos(x_{1})]^{\top}$ , where $\delta_1, \delta_2 \in [-1,1]$.
Given $\left\|d(x)\right\| \leq \left\| x \right\|$, and $g^{\dagger}(x)k(x)d(x) = 0$, Assumptions \ref{bound of d and inverse term}-\ref{bound of decomposed disturbance part} can be satisfied by setting $d_{M}(x) = \left\| x \right\|$, and $\left\|l_{M}(x)\right\| = 0$.
The input constraints are considered as $\mathbb{U}_j = \left\{u_j \in \mathbb{R} : |u_j| \leq 3 \right\}$, $j =1,2$. The state constraint in a rectangular form has been considered in Section \ref{simulation pendulum}. Thus, to show the effectiveness of the proposed RS-SP term to deal with general state constraints, a circular state constraint $\mathbb{X}_3 = \left\{x \in \mathbb{R}^{2} : h_3(x_1,x_2) = x^2_1+x^2_2-1<0\right\}$ is considered here.
The solution procedure for the robot manipulator's CRSP is similar to Section \ref{simulation pendulum}. Thus, the detailed explanation is omitted here for page limits. 
To fully demonstrate the effectiveness of our proposed control framework to address state constraints even under input saturation and environmental disturbances, the phase plot under multiple initial positions is shown in Fig.\ref{circe constriant phase plot}, where the robot manipulator is driven to reach the target point (zero point) while obeying the predefined state constraints. We observe that when states approach to the constraint boundary, they will be driven back to safe states under the proposed control framework.
\begin{figure}[!t]  
\centerline{\includegraphics[width=3.8in]{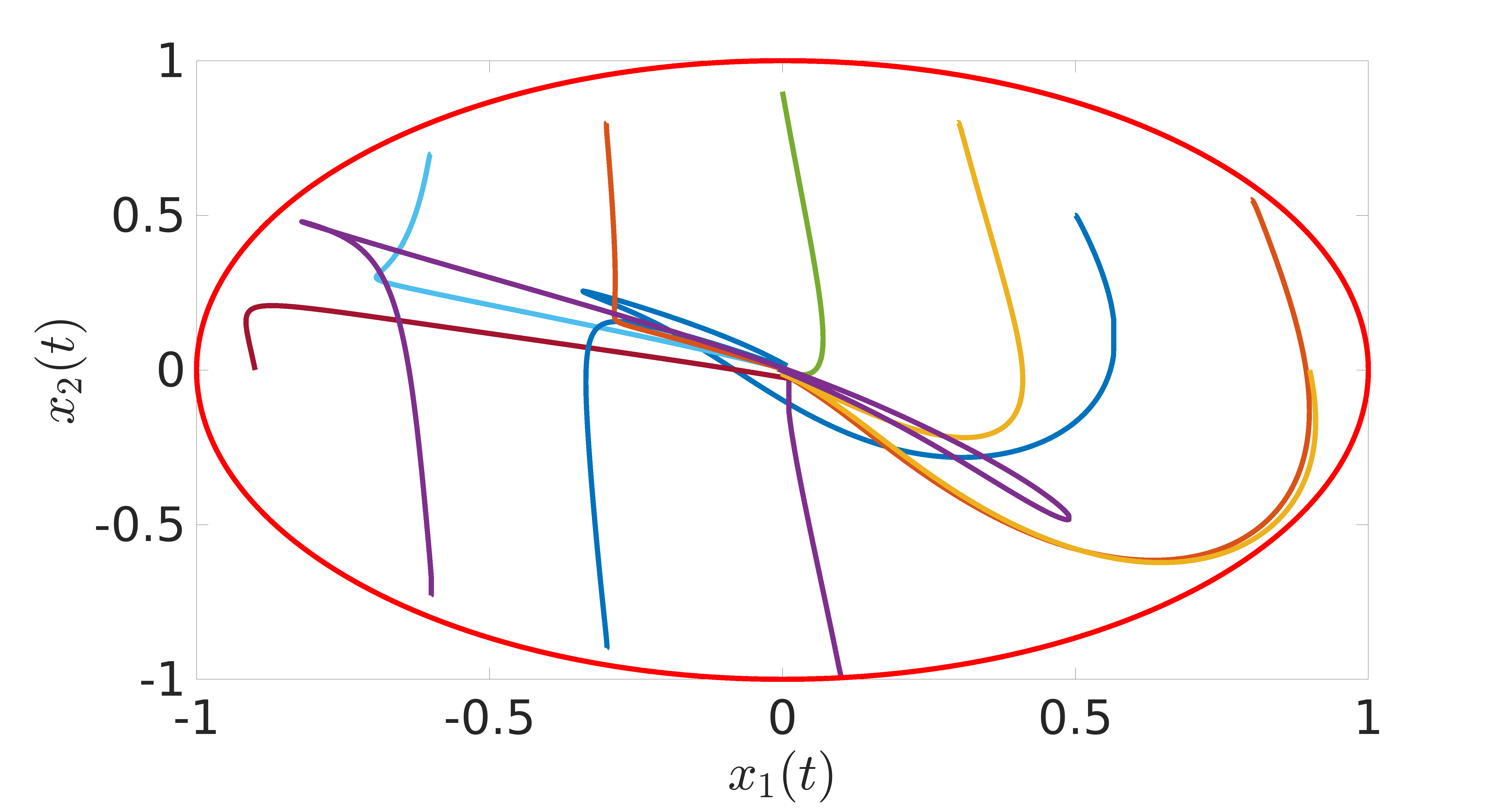}}
\caption{The phase plot of states $x_1$ and $x_2$ under different initial values for the 2-DoF robot manipulator. The red circle stands for the state constraint boundary.}
\label{circe constriant phase plot}
\end{figure}

In this section, we choose the activation function as $\Phi(x) = [x^{2}_{1},x^{2}_{2},x^{2}_{3},x^{2}_{4},x_{1}x_{2},x_{1}x_{3},x_{1}x_{4},x_{2}x_{3},x_{2}x_{4},x_{3}x_{4}]^{\top}$.
The weight convergence result is displayed in Fig.\ref{fig of critic weight 2DoF}, where the weight convergence result under our proposed weight update law \eqref{w update law} achieves at $t= 12$ $\mathrm{s}$.
The aforementioned simulation results based on a 2-DoF robot manipulator validate that the off-policy weight update law \eqref{w update law} and the approximate optimal control policy \eqref{approximation u} fulfill real-time requirements.
    \begin{figure}[!t]  
    \centering
    \includegraphics[width=3.8in]{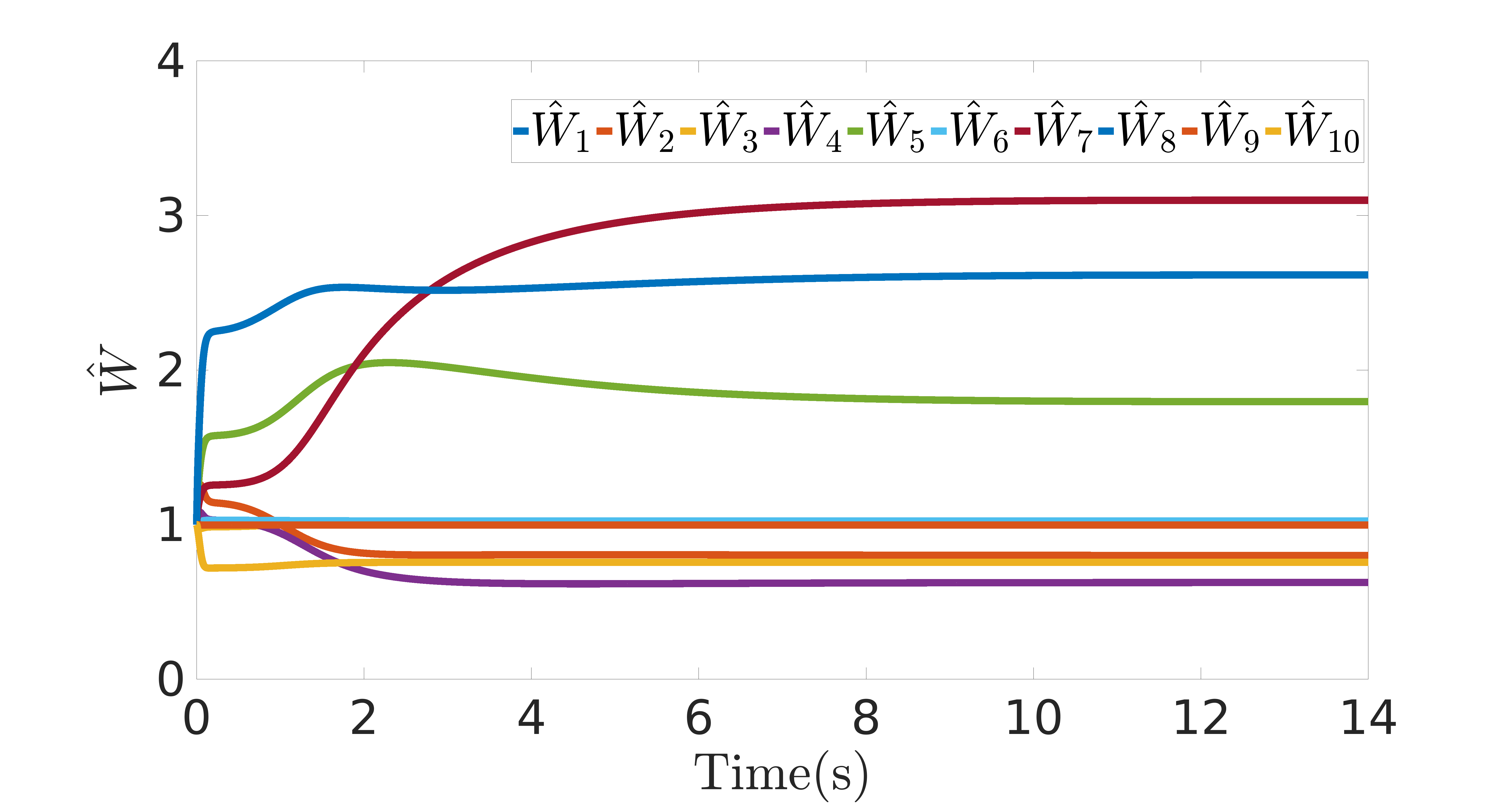}
    \caption{The evolution trajectory of the estimated weight $\hat{W}$ of the 2-DoF robot manipulator.}
    \label{fig of critic weight 2DoF}
    \end{figure}
    \section{Conclusion} \label{Section 5}
    An off-policy risk-sensitive RL-based control framework is proposed to stabilize a nonlinear system that subjects to additive disturbances, input and state constraints. 
    Firstly, the introduced pseudo control and the resulting auxiliary system enable us to address additive disturbances under an optimization framework.
    Then, risk-sensitive input and state penalty terms are incorporated into the cost function as optimization criteria, which allows us to tackle both input and state constraints in a long time-horizon. This helps to avoid the abrupt changes of control inputs that are unfavourable for the online learning process.
    The transformed OCP is solved by a single critic structure based ADP with an off-policy weight update law. The adopted single critic structure leads to computation simplicity and eliminates approximation errors caused by an actor NN. Besides, the exploitation of experience data to guarantee the weight convergence enables the proposed control strategy be applicable to practical applications.
    Multiple comparison results shown in the simulation part illustrate the effectiveness of the proposed control framework.  
     
\appendices
\section{Proof of Theorem 1} \label{Theorem 1}
\begin{proof}\label{proof for equivalence}
    \emph{(i) Proof of stability}. As for $V^{*}(x)$ defined as \eqref{optimal cost function}, we know that when $x =0$, $V^{*}(x) =0$, and $V^{*}(x) >0$ for $\forall x \ne 0$. 
    Thus, it can serve as a Lyapunov function candidate for stability proofs. Taking time derivative of $V^{*}(x)$ along the system \eqref{original sys} yields
    \begin{equation} \label{equalivance 1}
    \begin{aligned}
     \dot{V}^{*} &= \nabla {V^*}^{\top}(f(x)+g(x)\mu^{*}(x)+k(x)d(x)) \\
             &= \nabla {V^*}^{\top}(f(x)+g(x)\mu^{*}(x)+h(x)\nu^{*}(x))\\
             &+\nabla {V^*}^{\top} g(x) g^{\dagger}(x)k(x)d(x) + \nabla {V^*}^{\top} h(x)(d(x)-\nu^{*}(x)).
    \end{aligned}
    \end{equation}
    In light of \eqref{HJB equation}, we can get
    \begin{equation} \label{equalivance 2}
    \begin{aligned}
    &\nabla {V^*}^{\top}(f(x)+g(x)\mu^{*}(x)+h(x)\nu^{*}(x)) =- \mathcal{W}(\mu^{*}(x))\\
    & -\mathcal{L}(x)-\rho {\nu^*}^{\top}(x)\nu^{*}(x)-l^{2}_{M}(x) - \rho d^{2}_{M}(x).
   \end{aligned}
    \end{equation}
    From \eqref{optimal u}, we can get
    \begin{equation} \label{equalivance 3}
    \nabla {V^*}^{\top}g(x) = -2\beta R \tanh^{-1}(\mu^{*}(x) / \beta).
    \end{equation}
    Based on \eqref{optimal v}, the following equation establishes
    \begin{equation} \label{equalivance 4}
    \nabla {V^*}^{\top}h(x) = -2 \rho \nu^{*}(x).
    \end{equation}
    Substituting \eqref{equalivance 2}, \eqref{equalivance 3} and \eqref{equalivance 4} into \eqref{equalivance 1} yields
    \begin{equation} \label{equalivance 5}
        \begin{aligned}
             \dot{V}^{*} &=-\mathcal{W}(\mu^{*}(x))-\mathcal{L}(x)-\rho {\nu^*}^{\top}(x)\nu^{*}(x)-l^{2}_{M}(x) \\
             &- \rho d^{2}_{M}(x) -2\beta R \tanh^{-1}(\mu^{*}(x) / \beta)g^{\dagger}(x)k(x)d(x)\\
                       & - 2 \rho {\nu^*}^{\top}(x) d(x)
                        + 2 \rho {\nu^*}^{\top}(x)\nu^{*}(x).
        \end{aligned}
    \end{equation}
    By setting $\varsigma_j = \tanh^{-1}(\tau_{j} / \beta)$, we can get
    \begin{equation} \label{equalivance 6 of W}
    \begin{aligned}
    \mathcal{W}(\mu^{*}(x)) &= 2 \beta \sum_{j=1}^{m}\int_{0}^{\mu^{*}_{j}} R_j \tanh^{-1}(\tau_{j} / \beta)\,d\tau_{j} \\
    &=2 \beta^2 \sum_{j=1}^{m}\int_{0}^{\tanh^{-1}(\mu^{*}_{j} / \beta)}R_j \varsigma_j (1-\tanh^2 (\varsigma_j))\,d \varsigma_j \\
                     &= \beta^2 \sum_{j=1}^{m} R_j (\tanh^{-1}(\mu^{*}_{j}(x) / \beta ))^2 - \epsilon_{t},
    \end{aligned}
    \end{equation}
where $\epsilon_{t} := 2 \beta^2 \sum_{j=1}^{m} \int_{0}^{\tanh^{-1}(\mu^{*}_{j}(x) / \beta )} R_j \varsigma_j \tanh^2(\varsigma_j)\,d\varsigma_j$. 
Based on the integral mean-value theorem, there exist a series of $\theta_j \in [0,\tanh^{-1}(\mu^{*}_{j}(x) / \beta], j = 1, \cdots, m$, such that
    \begin{equation} \label{epsilon_u_1}
        \epsilon_{t} =  2 \beta^2 \sum_{j=1}^{m}  R_j \tanh^{-1}(\mu^{*}_{j}(x) / \beta ) \theta_j \tanh^2(\theta_j).
    \end{equation}
Bearing in mind the relation \eqref{equalivance 3} and the fact  $0<\tanh^2(\theta_j)\leq 1$, it follows that
    \begin{equation} \label{epsilon_u_2}
        \begin{aligned}
            \epsilon_{t} & \leq  2 \beta^2 \sum_{j=1}^{m} R_j \tanh^{-1}(\mu^{*}_{j}(x) / \beta )\theta_j \\
            & \leq 2 \beta^2 \sum_{j=1}^{m }R_j (\tanh^{-1}(\mu^{*}_{j}(x) / \beta))^2\\
            & = \frac{1}{2} \nabla {V^*}^{\top}g(x) R ^{-1} g^{\top}(x)\nabla V^{*}.
        \end{aligned}
    \end{equation}
    According to the definition of the admissible policy \cite{abu2005nearly}, $V^*$ is finite. Moreover, there exists $w_{M} > 0$ such that $\left\|\nabla V^{*}\right\| \leq \omega_{M}$. Based on Assumption \ref{bound of fg},  we can rewrite \eqref{epsilon_u_2} as
    \begin{equation} \label{epsilon_u_3}
        \begin{aligned}
            \epsilon_{t}  \leq b_{\epsilon_{t}}.
        \end{aligned}
    \end{equation}
    where $b_{\epsilon_{t}} := \frac{1}{2} \left\| R^{-1} \right\| g^2_M \omega^2_{M}$.
    Based on Assumption \ref{bound of d and inverse term}, the following equations establish
    \begin{equation} \label{Young inequality 1} 
        \begin{aligned}
            &-2\beta R \tanh^{-1}(\mu^{*}(x) / \beta)g^{\dagger}(x)k(x)d(x) \\
           & \leq \left\|\beta R \tanh^{-1}(\mu^{*}(x) / \beta) \right\|^2
             + \left\| g^{\dagger}(x)k(x)d(x) \right\|^2\\
             &\leq \beta^2 \sum_{j=1}^{m}R^{2}_j(\tanh^{-1}(\mu^{*}(x) / \beta)) ^2 + l_{M}^2(x),
        \end{aligned}
    \end{equation}
and
    \begin{equation} \label{Young inequality 2}
        \begin{aligned}
            &- 2 \rho {\nu^*}^{\top}(x) d(x) 
             \leq \rho \left\| \nu^{*}(x)  \right\|^2
            + \rho \left\| d(x)  \right\|^2 \\
            &\leq \rho \left\| \nu^{*}(x)  \right\|^2 + \rho d_{M}^2(x).
        \end{aligned}
    \end{equation}
    By substituting \eqref{equalivance 6 of W}, \eqref{epsilon_u_3}, \eqref{Young inequality 1} and \eqref{Young inequality 2} into \eqref{equalivance 5}, we have
    \begin{equation} \label{equalivance 7}
        \begin{aligned}
            \dot{V}^{*} &\leq -\mathcal{L}(x) + 2 \rho {\nu^*}^{\top}(x)\nu^{*}(x)+b_{\epsilon_{t}}\\
            &+\beta^2 \sum_{j=1}^{m}(R^{2}_j-R_j)(\tanh^{-1}(\mu^{*}(x) / \beta)) ^2 \\
           & =-\mathcal{L}(x) + 2 \rho {\nu^*}^{\top}(x)\nu^{*}(x)+ \epsilon_{s}.
        \end{aligned}
    \end{equation}
    where $\epsilon_{s} := b_{\epsilon_{t}}+\beta^2 \sum_{j=1}^{m}(R^{2}_j-R_j)(\tanh^{-1}(\mu^{*}(x) / \beta)) ^2$.
    Thus, $\dot{V}^{*}<0$ establishes, if the condition~\eqref{equalivance condition} holds.
    It yields that the optimal control policy $\mu^{*}(x)$ robustly stabilizes the system \eqref{original sys}.
    
    \emph{(ii) Proof of input and state constraint satisfaction}. 
    Denote $V^{*}(0)$ the value of the Lyapunov function candidate $V^{*}$ at $t = 0$. According to the definition of admissible control policies, $V^{*}(0)$ is a bounded function. If $\mathcal{L}(x) > 2\rho {\nu^*}^{\top}(x)\nu^{*}(x)+\epsilon_{s}$, $\dot{V}^{*} <0$ establishes, which means that $V^{*}(t) < V^{*}(0)$, $\forall t$. The boundness of $V^{*}(t)$ implies that state constraints will not be violated; otherwise, according to Definition \ref{risk-sensitive SPF}, $V^{*}(t) \to \infty$ if any state constraint violations happens.
     Since the hyperbolic tangent function satisfies $-1 \leq \tanh(\cdot) \leq 1$, the optimal control policy in \eqref{optimal u} follows  $-\beta \leq  \mu^{*}(x) \leq \beta$, i.e., inputs are confined into the safety set \eqref{input saturation}. The proof provided here means that the optimal control policy $\mu^{*}(x)$ for the system \eqref{original sys} guarantees satisfaction of both constraints in terms of system states and control inputs.
    \end{proof}

\section{Proof of Theorem 3} \label{Theorem 3}
\begin{proof}\label{proof for optimal problem results}
    Consider the following candidate Lyapunov function
    \begin{equation} \label{Lya function stability}
    J =  V^{*}(x) + \frac{1}{2} \Tilde{W}^{\top} \Gamma^{-1} \Tilde{W}.
    \end{equation}
    Taking time derivative of \eqref{Lya function stability} along the system \eqref{auxiliary system} yields
    \begin{equation} \label{stability dV}
       \dot{J}  = \dot{L}_{V}+\dot{L}_{W}.
    \end{equation}
    where $\dot{L}_{V} = \dot{V}^{*}(x)$ and $\dot{L}_{W}= \Tilde{W}^{\top} \Gamma^{-1} \dot{\hat{W}}$.
    
    \emph{The first term $\dot{L}_{V}$ follows}
    \begin{equation} \label{stability dLv}
        \begin{aligned}
            \dot{L}_{V} &= \nabla {{V}^{*}}^{\top}(f(x)+g(x)\hat{u}(x)+h(x)\hat{v}(x)) \\ 
                        &= \nabla {{V}^{*}}^{\top}(f(x)+g(x)\mu^{*}(x)+h(x)\nu^{*}(x))\\
                        &+\nabla {{V}^{*}}^{\top}g(x)(\hat{u}(x)-\mu^{*}(x))+\nabla{{V}^{*}}^{\top}h(x)(\hat{v}(x)-\nu^{*}(x)). \\ 
        \end{aligned}
    \end{equation}
    According to \eqref{equalivance 2}, \eqref{equalivance 3} and \eqref{equalivance 4}, \eqref{stability dLv} can be rewritten as
    \begin{equation} \label{stability dLv 1}
        \begin{aligned}
            \dot{L}_{V} &= -\mathcal{L}(x)- \mathcal{W}(\mu^{*}(x))-\rho \nu^{*}(x)^{\top}\nu^{*}(x)-l^{2}_{M}(x) \\            
            &- \rho d^{2}_{M}(x) -2\beta R \tanh^{-1}(\mu^{*}(x) / \beta)(\hat{u}(x)-\mu^{*}(x))\\
            &-2\rho {\nu^*}^{\top}(x)(\hat{v}(x)-\nu^{*}(x)). \\ 
        \end{aligned}
    \end{equation}
    Besides, we can get 
    \begin{equation} \label{stability dLv 2} 
        \begin{aligned}
    &-2\beta R \tanh^{-1}(\mu^{*}(x) / \beta)(\hat{u}(x)-\mu^{*}(x)) \\
    & \leq \beta^2 \left\| R \tanh^{-1}(\mu^{*}(x) / \beta) \right\|^2 
         + \left\| \hat{u}(x)-\mu^{*}(x) \right\|^2\\
       & \leq \beta^2 \sum_{j=1}^{m}R^{2}_j(\tanh^{-1}(\mu^{*}_{j}(x) / \beta ))^2 + \left\| \hat{u}(x)-\mu^{*}(x) \right\|^2.
        \end{aligned}
    \end{equation}
    Based on \eqref{equalivance 6 of W}-\eqref{equalivance 7}, the following equation also establishes
    \begin{equation} \label{stability dLv 3} 
        \begin{aligned}
        &- \mathcal{W}(\mu^{*}(x))-2\beta\tanh^{-1}(\mu^{*}(x) / \beta)(\hat{u}(x)-\mu^{*}(x)) \\
        &\leq \beta^2 \sum_{j=1}^{m} (R^2_j-R_j)(\tanh^{-1}(\mu^{*}_{j}(x)/\beta))^2+ b_{\epsilon_{t}} \\
        & + \left\| \hat{u}(x)-\mu^{*}(x) \right\|^2
         \leq \epsilon_{s}+\left\| \hat{u}(x)-\mu^{*}(x) \right\|^2 .
        \end{aligned}
    \end{equation}
    Substituting \eqref{stability dLv 3} into \eqref{stability dLv 1} yields
    \begin{equation} \label{stability dLv 4}
        \begin{aligned}
            \dot{L}_{V} &\leq -\mathcal{L}(x)-\rho \nu^{*}(x)^{\top}\nu^{*}(x)-l^{2}_{M}(x) - \rho d^{2}_{M}(x)+\epsilon_{s} \\
            & +\left\| \hat{u}(x)-\mu^{*}(x) \right\|^2 -2\rho {\nu^*}^{\top}(x)(\hat{v}(x)-\nu^{*}(x)) \\ 
            &= -\mathcal{L}(x)-l^{2}_{M}(x) - \rho d^{2}_{M}(x)+\epsilon_{s}-\rho\hat{v}^{\top}(x)\hat{v}(x)\\
            &+\left\| \hat{u}(x)-\mu^{*}(x) \right\|^2+\rho \left\| \hat{v}(x)-\nu^{*}(x) \right\|^2.
        \end{aligned}
    \end{equation}
    \emph{As for $\rho\hat{v}^{\top}(x)\hat{v}(x)$ in \eqref{stability dLv 4}}, according to \eqref{approximation v},
    \begin{equation} \label{pvv}
        \begin{aligned}
            &\rho\hat{v}^{\top}(x)\hat{v}(x) = \frac{1}{4 \rho}\hat{W}^{\top} \nabla \Phi(x) h(x)h^{\top}(x) \nabla \Phi^{\top}(x)\hat{W}\\
            &= \frac{1}{4 \rho}(W^{*}+\Tilde{W})^{\top} \nabla \Phi(x) h(x)h^{\top}(x) \nabla \Phi^{\top}(x)(W^{*}+\Tilde{W})\\
            &= \frac{1}{4 \rho}{W^*}^{\top} \mathscr{H} W^{*}
            +\frac{1}{4 \rho}\Tilde{W}^{\top} \mathscr{H} \Tilde{W}
            +\frac{1}{2 \rho} {W^*}^{\top} \mathscr{H}\Tilde{W}.
        \end{aligned}
    \end{equation}
    where $\mathscr{H} = \nabla \Phi(x) h(x)h^{\top}(x) \nabla \Phi^{\top}(x)$.
    
    \emph{As for $\rho \left\| \hat{v}(x)-\nu^{*}(x) \right\|^2$in \eqref{stability dLv 4}}, according to \eqref{approximation v},
    \begin{equation} \label{pvv2}
        \begin{aligned}
           & \rho \left\| \hat{v}(x)-\nu^{*}(x) \right\|^2
            = \rho \left\| \frac{1}{2 \rho} h^{\top}(x)\nabla \Phi(x)\Tilde{W} \right\|^2 \\
            & = \frac{1}{4 \rho}\Tilde{W} ^{\top} \mathscr{H}\Tilde{W}. 
        \end{aligned}
    \end{equation}
    For simplicity, denote $\mathscr{G}^* = \frac{1}{2\beta} R^{-1} g^{\top}(x)\nabla \Phi^{\top}(x)W^{*}$ and $\hat{\mathscr{G}} = \frac{1}{2\beta} R^{-1} g^{\top}(x)\nabla \Phi^{\top}(x)\hat{W}$, $\hat{\mathscr{G}} = [\hat{\mathscr{G}}_1,\cdots,\hat{\mathscr{G}}_m] \in \mathbb{R}^{m}$ with $\hat{\mathscr{G}}_j \in \mathbb{R}, j=1,\cdots,m$. 
    Based on \eqref{optimal u} and \eqref{approximation u}, the Taylor series of $\tanh(\mathscr{G}^*)$ follows
    \begin{equation} \label{Taylor series}
        \begin{aligned}
            \tanh(\mathscr{G}^*) &= \tanh(\hat{\mathscr{G}}) +\frac{\partial \tanh(\hat{\mathscr{G}})}{\partial\hat{\mathscr{G}}}(\mathscr{G}^*-\hat{\mathscr{G}})+O((\mathscr{G}^*-\hat{\mathscr{G}})^2)\\
            &=\tanh(\hat{\mathscr{G}})-\frac{1}{2 \beta}(I_{m \times m} - \mathscr{D}(\hat{\mathscr{G}})) R^{-1} g^{\top}(x)\\
            &~~\nabla \Phi^{\top}(x)\Tilde{W} + O((\mathscr{G}^*-\hat{\mathscr{G}})^2),
        \end{aligned}
    \end{equation}
    where $\mathscr{D}(\hat{\mathscr{G}}) = \diag{\tanh^2(\hat{\mathscr{G}}_1),\cdots,\tanh^2(\hat{\mathscr{G}}_m)}$, $O((\mathscr{G}^*-\hat{\mathscr{G}})^2)$ is a higher order term of the Taylor series. 
    By following  \cite[Lemma 1]{yang2016online}, the higher order term is bounded as
    \begin{equation}\label{bound of higher order term}
        \begin{aligned}
            \left\| O((\mathscr{G}^*-\hat{\mathscr{G}})^2)  \right\| & \leq 2\sqrt{m}+\frac{1}{\beta} \left\| R^{-1} \right\| g_M {b_\Phi}_x \left\|\Tilde{W} \right\|.
        \end{aligned}
    \end{equation}
    Using \eqref{optimal u}, \eqref{approximation u} and \eqref{Taylor series}, we get 
    \begin{equation}\label{u-u*-a}
        \begin{aligned}
            \hat{u}(x)-\mu^{*}(x) &= \beta(\tanh(\mathscr{G}^*)-\tanh(\hat{\mathscr{G}}))+\epsilon^*_u \\
            &=-\frac{1}{2}(I_{m \times m} - \mathscr{D}(\hat{\mathscr{G}})) R^{-1} g^{\top}(x)\nabla \Phi^{\top}(x)\Tilde{W} \\
            &+ \beta O((\mathscr{G}^*-\hat{\mathscr{G}})^2)+\epsilon^*_u.\\
        \end{aligned}
    \end{equation}
      where $\epsilon^*_u := \beta \tanh(\frac{1}{2\beta} R^{-1} g^{\top}(x)(\nabla \Phi^{\top}(x)W^*+\nabla \epsilon))-\beta \tanh(\frac{1}{2\beta} R^{-1} g^{\top}(x)\nabla \Phi^{\top}(x)W^*)$, and assuming that it is bounded by $\left\|\epsilon^*_u \right\| \leq b_{\epsilon^*_u}$.
    
    \emph{As for $\left\| \hat{u}(x)-\mu^{*}(x) \right\|^2$ in \eqref{stability dLv 4}}, since $\left\|I_{m \times m} - \mathscr{D}(\hat{\mathscr{G}})\right\| \leq 2$ \cite{yang2016online}, by combining \eqref{bound of higher order term} and \eqref{u-u*-a}, we get
    \begin{equation}\label{u-u* abs 2}
        \begin{aligned}
            &\left\| \hat{u}(x)-\mu^{*}(x) \right\|^2 \leq 3 \beta^2 \left\| O((\mathscr{G}^*-\hat{\mathscr{G}})^2)\right\|^2 +3\left\| \epsilon^*_u \right\|^2\\
            & +3\left\| -\frac{1}{2}(I_{m \times m} - \mathscr{D}(\hat{\mathscr{G}})) R^{-1} g^{\top}(x)\nabla \Phi^{\top}(x)\Tilde{W}\right\|^2 \\
            & \leq 6 \left\|R^{-1}\right\| ^2 g_M^2 {b^2_\Phi}_x \left\|\Tilde{W}\right\|^2+ 12m\beta^2 + 3b^2_{\epsilon^*_u} \\
            & + 12 \beta \sqrt{m}  \left\|R^{-1}\right\| g_M {b_\Phi}_x \left\|\Tilde{W}\right\| .
        \end{aligned}
    \end{equation}
    Substituting \eqref{pvv}, \eqref{pvv2}, \eqref{u-u* abs 2} into \eqref{stability dLv 4} yields
    \begin{equation} \label{stability dLv 5}
        \begin{aligned}
            \dot{L}_{V} &\leq -\frac{1}{2 \rho} {W^*}^{\top} \mathscr{H}\Tilde{W}-\mathcal{L}(x)-l^{2}_{M}(x)- \rho d^{2}_{M}(x)\\
            & -\frac{1}{4 \rho} {W^*}^{\top} \mathscr{H}{W^*}  +\epsilon_{s}
            +6 \left\|R^{-1}\right\|^2 g_M^2 {b^2_\Phi}_x \left\|\Tilde{W}\right\|^2 \\
            &+ 12m\beta^2 + 3b^2_{\epsilon^*_u} 
            + 12 \beta \sqrt{m} \left\|R^{-1}\right\| g_M {b_\Phi}_x \left\|\Tilde{W}\right\|.
        \end{aligned}
    \end{equation}
    \emph{As for the second term $\dot{L}_{W}$}, based on \eqref{w update law} and \eqref{dVcl},
    \begin{equation} \label{stability dLw}
        \begin{aligned}
            \dot{L}_{W} 
                &\leq - \Tilde{W}^{\top}X\Tilde{W}+\Tilde{W}^{\top}\epsilon_{er}.
        \end{aligned}
    \end{equation}
    
    \emph{Finally, as for $\dot{J}$}, substituting \eqref{stability dLv 5} and \eqref{stability dLw} into \eqref{stability dV}, based on the fact that $\left\|W^{*}\right\| \leq b_{W^{*}}$, $\left\|\nabla \Phi(x)\right\| \leq b_{\Phi_x}$, $\left\|h(x)\right\| \leq h_{M}$, we can get
    \begin{equation} \label{stability final result}
        \begin{aligned}
            \dot{J} & \leq -\mathcal{L}(x)-l^{2}_{M}(x) - \rho d^{2}_{M}(x)-\frac{1}{4 \rho} {W^*}^{\top} \mathscr{H}{W^*} \\
            & -\Tilde{W}^{\top}X\Tilde{W}+M\Tilde{W}+6 \left\|R^{-1}\right\|^2 g_M^2 {b^2_\Phi}_x \left\|\Tilde{W}\right\|^2 \\
            &+ 12 \beta \sqrt{m} \left\|R^{-1}\right\| g^2_M {b^2_\Phi}_x \left\|\Tilde{W}\right\| + 12m\beta^2 + 3b^2_{\epsilon^*_u}
            +\epsilon_{s}\\
            &\leq  -\mathcal{L}(x)-l^{2}_{M}(x) - \rho d^{2}_{M}(x)-\frac{1}{4 \rho} {W^*}^{\top} \mathscr{H}{W^*}\\
            &-(\lambda_{\min}(B)-6\left\| R^{-1}\right\|^2 g^2_M b^2_{{\Phi}_x }) \left\| \Tilde{W}\right\|^2 +12m\beta^2+ 3b^2_{\epsilon^*_u}\\
            &+(12 \beta \sqrt{m}\left\| R^{-1}\right\|g^2_M {b^2_\Phi}_x +b_M)\left\| \Tilde{W}\right\|
            +\epsilon_{s}\\
            &= -\mathcal{A}-\mathcal{B}\left\| \Tilde{W}\right\|^2+\mathcal{C}\left\| \Tilde{W}\right\|+\mathcal{D},
        \end{aligned} 
    \end{equation}
    where $M = \epsilon_{er}-  \frac{1}{2 \rho}{W^*}^{\top}\mathscr{H} $, and there exists $b_M := b_{\epsilon_{er}}+\frac{1}{2 \rho}  {b^2_\Phi}_x h^2_M b_{W^{*}} \in \mathbb{R}^+$ such that $\left\|M \right\| \leq  b_M$; 
    $\mathcal{A} = \mathcal{L}(x)+l^{2}_{M}(x) + \rho d^{2}_{M}(x)+\frac{1}{4 \rho} {W^*}^{\top} \mathscr{H}{W^*}$ is positive definite; 
    $\mathcal{B} = \lambda_{\min}(B)-6 \left\| R^{-1}\right\|^2 g^2_M b^2_{{\Phi}_x}$, 
    $\mathcal{C} = 12 \beta \sqrt{m}\left\| R^{-1}\right\|g^2_M {b^2_\Phi}_x +b_M$ and 
    $\mathcal{D} = 12m\beta^2+3b^2_{\epsilon^*_u}+\epsilon_{s}$.
    
    Let the parameters be chosen such that $\mathcal{B} > 0$. Since $\mathcal{A}$ is positive definite, the above Lyapunov derivative is negative if
    \begin{equation} \label{negative condition}
        \begin{aligned}
        \left\| \Tilde{W} \right\| > \frac{\mathcal{C}}{2\mathcal{B}}+\sqrt{\frac{\mathcal{C}^2}{4\mathcal{B}^2}+\frac{\mathcal{D} }{\mathcal{B}}}.
        \end{aligned}
    \end{equation}
    
    Thus, the critic weight learning error converges to the residual set defined as
    \begin{equation} \label{compact set}
        \begin{aligned}
        \Tilde{\Omega}_{\Tilde{W}} = \{\Tilde{W} | \left\| \Tilde{W} \right\| \leq \frac{\mathcal{C}}{2\mathcal{B}}+\sqrt{\frac{\mathcal{C}^2}{4\mathcal{B}^2}+\frac{\mathcal{D} }{\mathcal{B}}} \}.
        \end{aligned}
    \end{equation}

    \end{proof}

\bibliographystyle{IEEEtran}
\bibliography{bibtex/bib/IEEEexample}

\end{document}